\documentclass[journal]{IEEEtran}
\PassOptionsToPackage{bookmarks={false}}{hyperref}


\usepackage[latin9]{inputenc}
\usepackage{color}
\usepackage{amsmath}
\usepackage{amsthm}
\usepackage{amssymb}
\usepackage{graphicx}
\usepackage[T1]{fontenc}
\usepackage[active]{srcltx}
\usepackage{color}
\usepackage{float}
\usepackage{setspace}
\usepackage{enumitem}
\usepackage{subfloat}

\usepackage{url}

\usepackage{epsfig}
\usepackage[noadjust]{cite}
\usepackage{verbatim}
\usepackage{balance}
\usepackage{subfigure}
\usepackage{graphicx}

\setcounter{MaxMatrixCols}{10}
\newcommand{\beq}{\begin{equation}}
\newcommand{\eeq}{\end{equation}}

\def\bx{\mbox{\boldmath $x$}}

\def\bY{\mathbf{Y}}
\def\bD{\mbox{\boldmath $D$}}
\def\bH{\mbox{\boldmath $H$}}

\def\bU{\mbox{\boldmath $U$}}

\def\bQ{\mathbf{Q}}

\def\bZ{\mathbf Z}

\newcommand{\ds}{\displaystyle}

\newtheorem{theorem}{Theorem}

\newtheorem{lemma}[theorem]{Lemma}

\newtheorem{algorithm}{Algorithm}
\floatstyle{ruled}
\newfloat{algorithm}{tbp}{loa}
\providecommand{\algorithmname}{Algorithm}
\floatname{algorithm}{\protect\algorithmname}

\hyphenation{op-tical net-works semi-conduc-tor}

\begin{document}
%
\title{\vspace{-0.5cm} Joint Optimization of Radio and Computational Resources
for Multicell  Mobile-Edge Computing \vspace{-0.1cm}}
%
%
%
\author{Stefania Sardellitti,  Gesualdo Scutari,  and Sergio Barbarossa\vspace{-0.7cm}
\thanks{S. Sardellitti and S. Barbarossa are with the Dept. of Information Engineering, Electronics and Telecommunications,
 Sapienza University of Rome,  Rome, Italy. Emails:  \texttt{<stefania.sardellitti,} \texttt{sergio.barbarossa>@uniroma1.it}.\newline G. Scutari is with the  Dept. of Electrical Engineering, State University of New York at Buffalo, Buffalo, USA. Email: \texttt{gesualdo@buffalo.edu.}\newline
\indent The work of  Barbarossa and Sardellitti  was supported by the European Community 7th Framework Programme Project ICT-TROPIC, under grant nr. 318784.
The work of Scutari  was supported by the USA NSF under
Grants CMS 1218717 and CAREER Award No. 1254739.
 Part of this work was presented at IEEE SPAWC 2014  \cite{SarScuBarSPAWC14} and at IEEE CloudNet 2014 \cite{Sard_Cloud_net}. }}

\maketitle

\begin{abstract}
Migrating computational intensive tasks from mobile devices to more resourceful cloud servers is a
promising technique  to increase  the computational capacity of mobile devices while saving
their battery energy.
In this paper, we  consider a MIMO multicell system where multiple mobile users (MUs) ask for computation offloading to  a common  cloud server. We formulate the  offloading problem as the \emph{joint} optimization of the radio resources$-$the transmit precoding matrices of the MUs$-$and the  computational resources$-$the CPU cycles/second assigned  by the cloud to each  MU$-$in order to minimize the overall  users' energy consumption, while meeting latency constraints. The resulting optimization problem is nonconvex (in the objective function and constraints). Nevertheless,  in the single-user case,  we are able to express  the global optimal  solution in closed form.  In the more challenging multiuser scenario,  we propose   an iterative   algorithm, based on a novel successive convex approximation technique, converging  to a local optimal solution of the original nonconvex problem.  Then, we reformulate the algorithm in a distributed and parallel implementation across the radio access points, requiring only a limited coordination/signaling with the cloud. Numerical results show that the proposed schemes outperform disjoint optimization algorithms. 
\end{abstract}\vspace{-0.2cm}

\begin{IEEEkeywords}
Mobile cloud computing,  computation offloading, energy minimization, resources allocation, small cells.
\end{IEEEkeywords}

\section{Introduction}
Mobile terminals, such as smartphones, tablets and netbooks, are increasingly penetrating into our everyday lives as convenient tools for communication, entertainment, business, social networking, news, etc. Current predictions foresee
a doubling of mobile data traffic every year.
However such a growth in mobile wireless traffic is not matched with an equally fast improvement on mobile handsets'  batteries, as testified in \cite{Palacin}. The  limited battery lifetime
is then going to represent the stumbling block  to the deployment of computation-intensive  applications for mobile devices.
At the same time, in the Internet-of-Things (IoT) paradigm, a myriad of heterogeneous devices, with a wide range of
computational capabilities, are going to be interconnected. For many of them, the local computation resources are insufficient to run sophisticated applications.
In all these cases, a  possible strategy to overcome the above energy/computation bottleneck consists in enabling  resource-constrained mobile devices to offload their most energy-consuming tasks to nearby more resourceful servers.
This strategy has a long history and is reported in the literature
under different names, such  as {\it cyber foraging} \cite{Sharifi-Kafaie-Kashefi}, or {\it computation offloading} \cite{Kumar-Liu-Lu-Bhargava}. In recent years, cloud computing (CC) has provided a strong impulse to computation offloading through virtualization, which decouples the application environment from the underlying hardware resources
and thus
%
%
enables an efficient usage of available computing resources.
In particular, Mobile Cloud Computing (MCC) \cite{Fernando_et_al}  makes possible for mobile users to access cloud resources, such as   infrastructures, platforms, and software, on-demand.
Several works addressed mobile computation offloading, such as  \cite{Yang,Wolski,Zhang,Kaushi,Cardellini,Kumar_Lu, Cui, Miettinen,Huang-Wang-Niyato_TW12,Kao:Globecom:2014}.
Recent surveys are \cite{Fernando_et_al}, \cite{liu2013gearing},  and \cite{Sanaei-14}. Some works addressed the problem of program partitioning and offloading the most demanding program tasks, as e.g. in \cite{Yang,Wolski,Zhang,Kaushi}. Specific examples of mobile computation offloading techniques are: \textit{MAUI} \cite{Maui},  \textit{ThinkAir}  \cite{Kosta_2012}, and \textit{Phone2Cloud}  \cite{Phone2Cloud}.
The trade-off between the energy spent for computation and communication was studied in   \cite{Kumar_Lu, Cui, Miettinen, Wen}. A dynamic formulation of computation offloading was proposed in \cite{Huang-Wang-Niyato_TW12}. These works optimized offloading strategies, assuming a given radio access, and concentrated on single-user scenarios. In \cite{Barbarossa_FuNeMS2013}, it was proposed  a {\it joint} optimization of radio and computational resources, for the single user case.
The joint optimization was then extended to the multiuser case in \cite{Barbarossa_SPAWC2013}; see also \cite{Barbarossa_SPM2014} for a recent survey on joint optimization for computation offloading in a 5G perspective. The optimal joint allocation of radio and computing resources in \cite{Barbarossa_SPAWC2013}, \cite{Barbarossa_SPM2014} was assumed to be managed in a centralized way in the cloud. A decentralized solution, based on a game-theoretic formulation of the problem, was recently proposed in \cite{Chen_2014}, \cite{Cardellini}.
In current cellular networks, the major obstacles limiting an effective deployment of MCC strategies: i) the energy spent by mobile terminals, especially cell edge users, for radio access; and ii) the latency  experienced in reaching the (remote) cloud server through  a  wide area network (WAN). Indeed, in macro-cellular systems, the transmit power necessary for cell edge users to access a remote base station might null all potential benefits coming from offloading. Moreover, in many real-time mobile applications (e.g., online games, speech recognition, Facetime) the user Quality of Experience (QoE) is strongly affected by the system response time. Since controlling latency over a WAN might be very difficult, in many circumstances the QoE associated to MCC could be poor.

A possible way to tackle these challenges is  to bring {\it both} radio access and computational resources  closer to MUs.
This idea was suggested in
\cite{Satyanarayanan-Bahl-Caceres-Davies,liu2013gearing}, with the introduction of \textit{cloudlets}, providing proximity radio access to fixed servers through Wi-Fi. However, the lack of available fixed servers could limit the applicability of cloudlets. The European project TROPIC \cite{TROPIC} suggested to endow small cell LTE base stations with, albeit limited, cloud functionalities. In this way, one can exploit the potential dense deployment of small cell base stations to facilitate proximity access to computing resources and have advantages over Wi-Fi access in terms of Quality-of-Service  guarantee and a single technology system (no need for the MUs to switch between cellular and Wi-Fi standards). Very recently, the European Telecommunications Standards Institute (ETSI) launched a new standardization group on the so called \textit{Mobile-Edge Computing} (MEC), whose aim is to provide information technology   and cloud-computing capabilities within the Radio Access Network (RAN) in close proximity to mobile subscribers in order to offer a service environment characterized by proximity, low latency, and high rate access \cite{ETSI-MEC}.

Merging MEC with the dense deployment of (small cell) Base Stations (BSs), as foreseen in the 5G
standardization roadmap, makes possible a real proximity, ultra-low latency access to cloud functionalities \cite{Barbarossa_SPM2014}. However, in a dense deployment scenario, offloading becomes much more complicated because of  intercell interference. The goal of this paper is to propose a \emph{joint} optimization of radio and computational resources for computation offloading in a dense deployment scenario, \emph{in the presence of intercell interference}.
More specifically, the offloading problem is formulated as the minimization of the overall energy consumption, at the mobile terminals' side, under transmit power and latency constraints. 
The optimization variables are the mobile radio resources$-$the precoding (equivalently, covariance) matrices of the mobile MIMO transmitters$-$and the computational resources$-$the CPU cycles/second assigned  by the cloud to each  MU. The latency constraint is what couples computation and communication optimization variables. This problem is much more challenging than the (special) cases studied in the literature because of the presence of intercell interference, which  introduces a coupling among the precoding matrices of all  MUs,  while making the optimization problem nonconvex.
In this context, the main contributions of the paper are the following: i) in the single-user case,  we first establish the equivalence between the original nonconvex problem and a \emph{convex one}, and then derive the  {\it closed form} of its (global optimal) solution; ii)
in the multi-cell case, hinging on recent Successive Convex Approximation (SCA)  techniques \cite{Scutari_ICASSP14,Scutari_nonconvex}, we devise an iterative algorithm that is proved to converge to local optimal solutions of the original nonconvex problem; and  iii) we propose alternative decomposition algorithms to solve the original centralized problem in a distributed form, requiring  limited signaling among BSs and cloud; the algorithms differ for convergence speed, computational effort, communication overhead, and a-priori knowledge of system parameters, but they are all convergent under a unified set of conditions. Numerical results show that all the proposed  schemes converge quite fast to ``good'' solutions, yielding a significant energy saving with respect to disjoint optimization procedures, for applications requiring intensive computations and limited exchange of data to enable offloading.
The rest of the paper is organized as follows. In Section \ref{section:offloading} we introduce the system model;
 Section  \ref{Single_user} formulates the offloading optimization problem in the single  user case, whereas  Section \ref{section:offloading_multiple_cells} focuses on  the multi-cell scenario along with the proposed SCA algorithmic framework. The  decentralized implementation  is discussed in  Section \ref{section:decentralized}.\vspace{-0.2cm}
\section{Computation offloading}
\label{section:offloading}
Let us consider a network composed of $N_c$
cells; in each cell $n=1,\ldots, N_c$,  there is one Small Cell  enhanced Node B (SCeNB in LTE terminology) serving  $K_n$ MUs. We denote by $i_n$ the $i$-th user in the cell $n$,  and by $\mathcal I\triangleq \{i_n\,:\, i=1,\ldots , K_n,\,n=1,\ldots, N_c\}$ the set of all the users.    Each MU $i_n$ and SCeNB $n$ are equipped with $n_{T_{i_n}}$ transmit and $n_{R_n}$ receive antennas, respectively. The SCeNB's  are all  connected to a common  cloud provider, able to   serve  multiple  users concurrently.  We assume  that  MUs in the same cell transmit over orthogonal channels, whereas  users of different cells  may interfere against each other. \\ \indent In this scenario, each MU $i_n$ is willing to run an application    within a given maximum time $T_{i_n}$, while minimizing the energy consumption at the MU's side.
To offload computations to the remote cloud, the MU has to send all the needed information to the server.
Each module to be executed is characterized by: the number $w_{i_{n}}$ of  CPU cycles necessary to run the module itself; the number $b_{i_n}$ of input bits necessary to transfer the program execution from local to remote sides; and the number $b^{\texttt{o}}_{ i_n}$ of output bits encoding the result of the computation, to be sent back from remote to local sides.
\indent The MU can perform  its
computations locally or   offload them  to the cloud, depending on which strategy requires less
energy,  while satisfying the latency constraint.
In case of offloading,  the latency  incorporates
the time to transmit the input bits to the server,  the time necessary for the server to execute
the instructions, and the time to send the result back to the MU.
More specifically, the overall latency experienced by each  MU $i_n$ can be written as
\beq\label{Avg_Delay}
\Delta_{i_n}={\Delta}^{\texttt{t}}_{i_n}+\Delta^{\texttt{exe}}_{i_n}+\Delta^{\texttt{tx/rx}}_{i_n}
\eeq
where ${\Delta}^{\texttt{t}}_{i_n}$ is the time necessary for the  MU $i_n$ to transfer the input bits $b_{i_n}$  
to its SCeNB; $\Delta^{\texttt{exe}}_{i_n}$ is the time for the server to   execute $w_{i_n}$ CPU cycles; and $\Delta^{\texttt{tx/rx}}_{i_n}$ is the   time  necessary for SCeNB $n$ to send the $b_{i_n}$ bits
to  the cloud through the backhaul link plus the time necessary to send back the result (encoded in $b^{\texttt{o}}_{ i_n}$ bits)  from the server to MU $i_n$.
We derive next an explicit expression of ${\Delta}^{\texttt{t}}_{i_n}$ and $\Delta^{\texttt{exe}}_{i_n}$ as a function of the radio and computational resources.\\
\noindent\textbf{Radio resources}: The optimization variables at radio level are the  users' transmit covariance matrices   $\mathbf{Q}\triangleq (\mathbf{Q}_{i_n})_{i_n\in \mathcal I}$, subject to power budget constraints
\begin{equation}
\mathcal{Q}_{i_n}\triangleq\left\{
\mathbf{Q}_{i_n}\in \mathbb{C}^{n_{T_{i_n}}\times n_{T_{i_n}}}:\mathbf{Q}_{i_n}\succeq\mathbf{0},\,\,\text{tr}\left(\mathbf{Q}_{i_n}\right)\leq
P_{i_n}\right\} ,\label{eq:set_Q_i}
\end{equation}
where $P_{i_n}$ is the average transmit power of user $i_n$. We will denote by $\mathcal Q$ the joint set $\mathcal Q \triangleq \prod _{i_n\in \mathcal I} \mathcal Q_{i_n}$. \\ \indent For any given profile $\mathbf{Q}\triangleq (\mathbf{Q}_{i_n})_{i_n\in \mathcal I}$,   the maximum achievable rate of MU $i_n$ is:
\begin{equation}\label{rate}
r_{i_n}(\mathbf{Q})=\log_2 \det \left(\mathbf{I}+\mathbf{H}_{i_n n}^H{\mathbf{R}}_{n}(\mathbf{Q}_{-n})^{-1}{\mathbf{H}_{i_n n}} \mathbf{Q}_{i_n}
\right) \end{equation}
where \vspace{-0.2cm}   \begin{equation}\label{eq:MUI}\mathbf{R}_{n}(\mathbf{Q}_{-n})\triangleq \mathbf{R}_{w}+  \!\!\sum_{j_m \in \mathcal{I}, m\neq n} \! \! \! \! \mathbf{H}_{j_m n} \mathbf{Q}_{j_m} \mathbf{H}_{j_m n}^H,\vspace{-0.2cm}\end{equation} is the covariance matrix of the noise $\mathbf{R}_{w}\triangleq \sigma_w^2 \mathbf{I}$ (assumed to be diagonal w.l.o.g, otherwise one can always pre-whitening the channel matrices) plus the inter-cell interference at the SCeNB $n$ (treated as additive noise); $\mathbf{H}_{i_n n}$ is the channel matrix of the uplink  $i$ in the cell $n$, whereas $\mathbf{H}_{j_m n}$ is the cross-channel matrix between the interferer MU $j$ in the cell $m$ and the  SCeNB of cell $n$; and   $\mathbf{Q}_{-n}\triangleq ((\mathbf{Q}_{j_m})_{j=1}^{K_m})_{n\neq m=1}^{N_c}$  denotes  the tuple of the covariance matrices of all users interfering with the  SCeNB $n$.\\
 \indent Given each $r_{i_n}(\mathbf{Q})$, the time  ${\Delta}^{\texttt{t}}_{i_n}$  necessary for user $i$ in cell $n$ to transmit the input bits $b_{i_n}$
 of duration $T_{b_{i_n}}$ to its  SCeNB  can be written as\vspace{-0.1cm}
 \begin{equation}\label{travel_time}
\Delta^{\texttt{t}}_{i_n} = \Delta^{\texttt{t}}_{i_n} \left(\mathbf{Q}\right)= \dfrac{c_{i_n}}{r_{i_n}(\mathbf{Q})}\vspace{-0.2cm}
 \end{equation}
 where $c_{i_n}=b_{i_n} T_{b_{i_n}}$.
The energy consumption  due to offloading is then\vspace{-0.1cm}  \begin{equation}\label{energy}{E}_{i_n}(\mathbf{Q}_{i_n},\mathbf{Q}_{-n})= \text{tr}(\mathbf{Q}_{i_n})\cdot {\Delta^{\texttt{t}}_{i_n}\left(\mathbf{Q}\right)},\end{equation}
 which depends also on the covariance matrices $\mathbf{Q}_{-n}$ of the   users in the other cells, due to the intercell interference.\\
\noindent \textbf{Computational resources}.  The  cloud provider  is able to   serve multiple  users   concurrently. The computational resources made available  by the cloud and shared among  the users are quantified in terms of  number  of CPU cycles/second, set to $f_T$; let  $f_{i_n}\geq 0$ be the fraction of $f_T$ assigned to each user $i_n$. All the $f_{i_n}$ are thus nonnegative optimization variables to be determined,  subject to the computational budget constraint $\sum_{i_n\in \mathcal I}f_{i_n}\leq f_T$. Given  the resource assignment $f_{i_n}$, the time $\Delta^{\texttt{exe}}_{i_n}$ needed to run   $w_{i_n}$ CPU cycles of user $i_n$'s instructions remotely   
is then
\begin{equation} \Delta^{\texttt{exe}}_{i_n}= \Delta^{\texttt{exe}}_{i_n}\left(f_{i_n}\right)= {w_{i_n}}/{f_{i_n}}. \label{exc_time}
\end{equation}

The expression of the overall latency $\Delta_{i_n}$ [cf. (\ref{Avg_Delay}), (\ref{travel_time}), and (\ref{exc_time})] clearly shows the interplay between radio access and computational aspects, which motivates  a \emph{joint} optimization of the radio resources, the transmit covariance matrices  $\mathbf{Q}\triangleq (\mathbf{Q}_{i_n})_{i_n\in \mathcal I}$ of the MUs, and the computational resources, the computational rate allocation $\mathbf{f}\triangleq ({f}_{i_n})_{i_n\in \mathcal I}$.

We are now ready to formulate the offloading problem rigorously. We focus first on the single-user scenario (cf. Sec. \ref{Single_user}); this will allow us to shed light on the special structure of the optimal solution. Then, we will extend the formulation to   the  multiple-cells case (cf. Sec. \ref{section:offloading_multiple_cells}).\vspace{-0.1cm}

\section{The Single-user case} \label{Single_user}\vspace{-0.1cm}
 In the single-user case, there is only one active MU having  access to the cloud. In such interference-free scenario, the maximum achievable rate on the MU and energy consumption due to offloading reduce to [cf. (\ref{rate}) and (\ref{energy})]
\beq \label{rate_SU}
r(\mathbf{Q})=\log_2 \det \left(\mathbf{I}+\mathbf{H}\mathbf{Q}\mathbf{H}^{H} {\mathbf{R}}_{w}^{-1}\right)
\eeq
and\vspace{-0.2cm}
\beq \label{energy_SU}
{E}(\mathbf{Q})=c\cdot \ds \frac{\text{tr}(\mathbf{Q})}{r(\mathbf{Q})},
\eeq
 respectively, with $c=b\cdot T_b$ (for notational simplicity, we omit the user index;  $\bQ$ denotes now  the covariance matrix of the MU).

We formulate the offloading problem as the minimization of the energy spent by the MU to run its application remotely, subject to latency and transmit power constraints, as follows:
\begin{equation}
\vspace{-0.1cm}\begin{array}{llll}
\underset{\mathbf{Q},\,  {f}}{\min}
 \quad  E(\mathbf{Q}) \\
\left.\begin{array}{clll} \mbox{\!\!s.t.} & \texttt{a)}   \ds\frac{c}{r(\mathbf{Q})}+\dfrac{w}{f}
- \tilde{T}  \leq 0 \medskip\\
& \texttt{b)}   0 \leq f \leq f_T\medskip \\
& \texttt{c)}  \mbox{tr}(\mathbf{Q})\leq P_{T}, \quad \mathbf{Q}\succeq \mathbf{0}\\
  \end{array} \right\} \triangleq \mathcal{X}_s &
\end{array} \; \label{P_MIMO_SU_ener} \tag{$\mathcal P_s$}
\end{equation}
 where   a) reflects the user latency constraint $\Delta \leq {T} $  [cf. (\ref{Avg_Delay})], with $\tilde{T}$  capturing all the constant terms, i.e., $\tilde{T}\triangleq {T}-\Delta^\texttt{tx/rx}$;
 b) imposes a limit on the cloud computational resources made available to the users; and c) is the power budget  constraint on the radio resources.

\noindent \textbf{Feasibility:} Depending on the system parameters, problem \ref{P_MIMO_SU_ener} may  be feasible or not. In the latter case, offloading is not possible and thus the  MU will perform its computations locally. It is not difficult to prove that the   following condition is  \emph{necessary} and \emph{sufficient}   for $\mathcal X_s$ to be nonempty and thus for offloading to be feasible:
\begin{equation}\label{eq:feasibility_su}
 \frac{c}{r^{\max}}+\dfrac{w}{f_T}
- \tilde{T}  \leq 0
\end{equation}
where $r^{\max}$ is the capacity of the MIMO link of the MU, i.e.,
\begin{equation}\label{eq:MIMOWF}
r^{\max}=\underset{\mathbf{Q}\succeq\mathbf{0}\,:\,\text{{tr}}(\mathbf{Q})\leq P_{T}}{\text{{argmax}}}r(\mathbf{Q}).
\end{equation}
The unique (closed-form) solution of (\ref{eq:MIMOWF}) is the well-known MIMO water-filling. Note that condition \eqref{eq:feasibility_su} has an interesting physical interpretation: offloading is feasible if and only if $\tilde{T}>0$, i.e., the delay on the wired network $\Delta^\texttt{tx/rx}$ is less than the maximum tolerable delay, and the overall latency constraint  is met (at least) when the  wireless and computational resources are  fully utilized (i.e., $r(\bQ)=r^{\max}$, and $f=f_T$). It is not difficult to check that this worst-case scenario is in fact achieved  when \eqref{eq:feasibility_su} is satisfied with equality; in such a case,  the (globally optimal) solution $(\bQ^\star, f^\star)$ to \ref{P_MIMO_SU_ener}  is trivially given by $(\bQ^\star, f^\star)=(\bQ^{\texttt{wf}}, f_T)$, where $\bQ^{\texttt{wf}}$ is the waterfilling solution to \eqref{eq:MIMOWF}. Therefore in the following we will focus w.l.o.g. on \ref{P_MIMO_SU_ener} under the tacit assumption of \emph{strict} feasibility [i.e., the inequality in \eqref{eq:feasibility_su} is tight].

\noindent \textbf{Solution Analysis:} Problem \ref{P_MIMO_SU_ener} is nonconvex due to the non-convexity of the energy function.  A major contribution of this section is to i) cast \ref{P_MIMO_SU_ener} into a convex equivalent problem, and ii) compute its global optimal solution (and thus optimal also to \ref{P_MIMO_SU_ener})  in closed form.  To do so, we introduce first some preliminary definitions.

 Let \ref{P_MIMO_SU} be the following auxiliary \emph{convex} problem
\begin{equation}
\begin{array}{llll}
\underset{\mathbf{Q},  {f}}{\min}
\quad \text{tr}(\mathbf{Q}) \\
\left.\begin{array}{llll} \mbox{\!\!s.t.} & \texttt{a)}   \ds \frac{c}{r(\mathbf{Q})}+\dfrac{w}{f}
- \tilde{T}  \leq 0 \medskip\\
& \texttt{b)}   0 \leq f \leq f_T\medskip\\
& \texttt{c)}  \mbox{tr}(\mathbf{Q})\leq P_{T}, \quad \mathbf{Q}\succeq \mathbf{0}\\
  \end{array} \right\} =\mathcal{X}_s &
\end{array} \; \label{P_MIMO_SU} \tag{$\mathcal Q_s$}
\end{equation}
which   corresponds to minimizing the transmit power of the MU under the same latency and power constraints as in \ref{P_MIMO_SU_ener}. Also, let   $\mathbf{H}^{H} {\mathbf{R}}_{w}^{-1}\mathbf{H}=\mathbf{U}\mathbf{D}\mathbf{U}^H$ be the (reduced) eigenvalue decomposition of $\mathbf{H}^{H} {\mathbf{R}}_{w}^{-1}\mathbf{H}$, with $r\triangleq \text{rank}(\mathbf{H}^{H} {\mathbf{R}}_{w}^{-1}\mathbf{H})=\text{rank}(\bH)$, where $\mathbf{U}\in \mathbb{C}^{n_T\times r}$  is the (semi-)unitary matrix whose columns are the eigenvectors associated with the $r$ positive eigenvalues of
  $\mathbf{H}^{H} {\mathbf{R}}_{w}^{-1}\mathbf{H}$, and $\mathbb{R}_{++}^{r\times r}\ni\bD\triangleq \text{diag}\{(d_i)_{i=1}^{r}\}$ is the diagonal matrix, whose diagonal entries are the eigenvalues  arranged in decreasing order. We are now ready to establish the connection between \ref{P_MIMO_SU_ener}  and \ref{P_MIMO_SU}.

\begin{theorem}\label{thm:MIMO_SU_SC_energy}\it{  Given
 problems \ref{P_MIMO_SU_ener}  and \ref{P_MIMO_SU}  under strict feasibility, the following hold.

\noindent \emph{(a)}  \ref{P_MIMO_SU_ener}  and \ref{P_MIMO_SU} are equivalent;\smallskip

\noindent \emph{(b)} \ref{P_MIMO_SU} (and \ref{P_MIMO_SU_ener}) has a unique solution $(\bQ^{\star},f^\star)$, given by\vspace{-0.2cm}
\begin{equation}
f^\star=f_T,\quad \text{and}\quad  \mathbf{Q}^{\star}=\mathbf{U}\left(\alpha \mathbf{I} -\mathbf{D}^{-1}\right)^{+}\mathbf{U}^H, \label{MIM0_SU}\end{equation}
where    $\alpha>0$ must be chosen so  that the latency constraint (a) in $\mathcal X_s$   is satisfied with equality at  $(\bQ^{\star},f^\star)$, and $(\bx)^+\triangleq \max(\mathbf{0}, \bx)$ (intended component-wise).

The water-level $\alpha>0$ can be efficiently computed using the  hypothesis-testing-based algorithm described in Algorithm \ref{algorithm:Alg_SU_SC_MIMO}.}\end{theorem}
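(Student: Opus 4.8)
The plan is to establish part (b) first for the convex program $\mathcal{Q}_s$, computing its optimizer in closed form, and then to deduce part (a) by showing that every minimizer of the energy problem $\mathcal{P}_s$ must coincide with that same point. Throughout I set $\rho\triangleq \tilde{T}-w/f_T$ and $r_0\triangleq c/\rho$; under strict feasibility [cf. \eqref{eq:feasibility_su}] one has $\rho>0$ and $r_0<r^{\max}$, a fact I will use to discharge the power budget constraint (c). A preliminary reduction disposes of the variable $f$ in both problems: $f$ enters only through the term $w/f$ in the latency constraint (a) and is absent from both objectives, and since $w/f$ is decreasing in $f$, enlarging $f$ only relaxes (a). Hence any optimizer may be replaced by one with $f^\star=f_T$ without altering feasibility or objective value, so I may assume $f^\star=f_T$ in both $\mathcal{P}_s$ and $\mathcal{Q}_s$, whereupon (a) collapses to the pure rate requirement $r(\mathbf{Q})\geq r_0$.

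For part (b) I would then solve $\min\,\text{tr}(\mathbf{Q})$ subject to $r(\mathbf{Q})\geq r_0$, $\mathbf{Q}\succeq\mathbf{0}$, $\text{tr}(\mathbf{Q})\leq P_T$, which is convex because $r$ is concave. First I argue the rate constraint is active: if $r(\mathbf{Q}^\star)>r_0$, then by continuity of $r$ and $r(\mathbf{0})=0$ the scaled point $t\mathbf{Q}^\star$ with $t$ slightly below $1$ stays feasible while strictly decreasing $\text{tr}(\mathbf{Q})$, a contradiction. With the constraint binding I diagonalize via $\mathbf{H}^H\mathbf{R}_w^{-1}\mathbf{H}=\mathbf{U}\mathbf{D}\mathbf{U}^H$, writing $r(\mathbf{Q})=\log_2\det(\mathbf{I}+\mathbf{U}\mathbf{D}\mathbf{U}^H\mathbf{Q})$; concavity of $\log\det$ together with a standard eigenvector-alignment argument shows the optimal $\mathbf{Q}$ has the form $\mathbf{U}\,\text{diag}(p_1,\dots,p_r)\,\mathbf{U}^H$. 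This reduces the matrix problem to the scalar one $\min\sum_i p_i$ s.t. $\sum_i\log_2(1+d_i p_i)\geq r_0$, $p_i\geq 0$, whose KKT conditions give the waterfilling levels $p_i=(\alpha-1/d_i)^+$, i.e. $\mathbf{Q}^\star=\mathbf{U}(\alpha\mathbf{I}-\mathbf{D}^{-1})^+\mathbf{U}^H$, with the single multiplier $\alpha>0$ fixed by the active equality $r(\mathbf{Q}^\star)=r_0$. Strict feasibility $r_0<r^{\max}$ forces $\text{tr}(\mathbf{Q}^\star)<P_T$, so (c) is inactive, and strict concavity of $\log_2(1+d_i p_i)$ on the active indices yields uniqueness of the $p_i$ and hence of $\mathbf{Q}^\star$.

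For part (a) it remains to show that a minimizer of $\mathcal{P}_s$ is exactly $(\mathbf{Q}^\star,f_T)$. The crux is that the latency constraint binds at the energy optimum. Suppose not: with $f=f_T$, $r(\mathbf{Q}^\star)>r_0$. The decisive inequality is the ray-superlinearity $r(t\mathbf{Q})\geq t\,r(\mathbf{Q})$ for $t\in(0,1]$, which follows from concavity of $t\mapsto r(t\mathbf{Q})=\log_2\det(\mathbf{I}+t\,\mathbf{H}\mathbf{Q}\mathbf{H}^H\mathbf{R}_w^{-1})$ combined with $r(\mathbf{0})=0$, and is strict for $t\in(0,1)$ whenever $\mathbf{Q}\neq\mathbf{0}$. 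Picking $t^\star\in(0,1)$ with $r(t^\star\mathbf{Q}^\star)=r_0$ (possible by continuity), this gives $E(t^\star\mathbf{Q}^\star)=c\,t^\star\text{tr}(\mathbf{Q}^\star)/r_0<c\,\text{tr}(\mathbf{Q}^\star)/r(\mathbf{Q}^\star)=E(\mathbf{Q}^\star)$, contradicting optimality. Thus at any $\mathcal{P}_s$ optimum the rate is pinned to $r_0$, so $E(\mathbf{Q})=c\,\text{tr}(\mathbf{Q})/r_0$ is a positive multiple of $\text{tr}(\mathbf{Q})$; minimizing energy over the active set is therefore identical to $\mathcal{Q}_s$, which proves equivalence and shows $\mathcal{P}_s$ inherits the unique solution of part (b).

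I expect the two delicate points to be (i) the eigenvector-alignment step that justifies simultaneously diagonalizing the power-minimizing covariance with the channel Gram matrix $\mathbf{H}^H\mathbf{R}_w^{-1}\mathbf{H}$, thereby collapsing the matrix program to scalar waterfilling, and (ii) the strict ray-superlinearity $r(t\mathbf{Q})>t\,r(\mathbf{Q})$ that forces the latency constraint to bind in the nonconvex energy problem; the remaining steps$-$the scalar KKT analysis, inactivity of the power budget under strict feasibility, and uniqueness$-$are routine.
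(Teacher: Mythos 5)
Your proof is correct, but it takes a genuinely different route from the paper's. The paper establishes part (a) in two steps: it first shows that $E(\mathbf{Q})$ is \emph{pseudo-convex} on $\mathcal{X}_s$ (so every stationary point of the nonconvex \ref{P_MIMO_SU_ener} is a global minimum), and then proves an explicit correspondence between KKT tuples of \ref{P_MIMO_SU} and of \ref{P_MIMO_SU_ener}, constructing the multipliers of one problem from those of the other; part (b) is then read off the KKT system of \ref{P_MIMO_SU}. You avoid both the pseudo-convexity machinery and the multiplier bookkeeping by arguing variationally: the strict ray-superlinearity $r(t\mathbf{Q})>t\,r(\mathbf{Q})$ for $t\in(0,1)$ (which follows from strict concavity of $t\mapsto\log\det(\mathbf{I}+t\mathbf{H}\mathbf{Q}\mathbf{H}^H\mathbf{R}_w^{-1})$ and $r(\mathbf{0})=0$, and which you verify correctly) forces the latency constraint to bind at any energy optimum, after which $E$ is a positive multiple of $\mathrm{tr}(\mathbf{Q})$ on the active set and the two problems visibly share their unique minimizer. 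Your argument is more elementary and self-contained; what the paper's route buys is the stronger by-product that \emph{every stationary point} of \ref{P_MIMO_SU_ener} is globally optimal (not merely that the minimizers coincide), plus explicit multiplier formulas. Two small points you should make explicit to close the argument: (i) uniqueness of $f^\star=f_T$ is not quite a "WLOG" — any optimum with $f<f_T$ forces $r(\mathbf{Q})>c/(\tilde{T}-w/f)>r_0$, which your scaling argument then excludes, so $f^\star=f_T$ is in fact necessary; and (ii) the contradiction scheme presupposes that \ref{P_MIMO_SU_ener} attains its minimum, which holds because $\mathcal{X}_s$ is compact ($r(\mathbf{Q})\geq c/\tilde{T}>0$ and $f\geq w/\tilde{T}>0$ throughout $\mathcal{X}_s$) and $E$ is continuous there. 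The eigenvector-alignment step you flag is the standard Hadamard-inequality argument and is fine to cite as such.
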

\begin{proof} See Appendix \ref{A:proof Th1}.\end{proof}\vspace{-0.3cm}

\begin{algorithm}[H]

\textbf{Data:} $(d_i)_{i=1}^r>\mathbf{0}$ (arranged in decreasing order),   $r=\mbox{rank}(\mathbf{H}^{H} {\mathbf{R}}_{w}^{-1} \mathbf{H})$, and $L\triangleq \tilde{T}-w/f_T>0$;

(\texttt{S.0}): {Set} $r_e=r$;

(\texttt{S.1}): {Repeat}\vspace{-0.2cm}

\quad \quad \quad   \,\,(a): {Set} $\alpha= 2^{\ds \frac{c}{r_e L}-\ds \frac{1}{r_e}\ds\sum_{i=1}^{r_e} \log_2(d_i)};$\bigskip

\quad \quad \quad       \,\,(b): { If }  $p_i\triangleq (\alpha - 1/d_i)\geq 0$,  $\forall i=1,\ldots,r_e$, \smallskip

\quad \quad \quad \hspace{1.2cm} {and} \,$\sum_{i=1}^{r_e}p_i\leq P_T$,

\quad \quad \quad    \quad \quad  \, then \texttt{STOP};\smallskip

 \quad \quad \quad    \quad \quad   \,\,\,{else}\, $r_e=r_e-1$; \smallskip

  \quad \quad \quad   {until}   $r_e\geq 1$.
       \caption{Efficient computation of $\alpha$ in (\ref{MIM0_SU}) }
 \label{algorithm:Alg_SU_SC_MIMO}
\end{algorithm}\vspace{-0.2cm}

Theorem 1 is the formal proof that, in the single-user case, the latency constraint has to be met with equality and then the offloading strategy minimizing energy consumption coincides with the one minimizing the transmit power. Note also that  $\bQ^\star$  has a water-filling-like structure: the optimal transmit ``directions'' are aligned with the eigenvectors $\bU$ of the equivalent channel $\mathbf{H}^{H} {\mathbf{R}}_{w}^{-1}\mathbf{H}$. However, differently from the classical waterfilling solution $\bQ^{\text{wf}}$  [cf. (\ref{eq:MIMOWF})], the waterlevel $\alpha$  is now computed to meet the latency constraints   with equality. This means  that a transmit strategy using the full power $P_T$ (like $\bQ^{\text{wf}}$) is no longer  optimal. 
The only case in which  $\bQ^\star\equiv \bQ^{\text{wf}}$ is the case where the feasibility condition (\ref{eq:feasibility_su}) is satisfied with equality. Note also that the water-level $\alpha$ depends now on \emph{both} communication and computational parameters (the maximum tolerable delay,  size of the program state, CPU cycle budget, etc.).\vspace{-0.2cm}

\section{Computation offloading over multiple-cells}
\label{section:offloading_multiple_cells}
 In this section we consider   the more general multi-cell scenario described in Sec.\ref{section:offloading}.
The overall energy   spent by the MUs to remotely run their applications is now given by\vspace{-0.2cm}

 \begin{equation}{E}(\mathbf{Q})\triangleq \displaystyle{\sum_{i_n \in \mathcal{I}}} E_{i_n}(\mathbf{Q}),\label{MUs_energy}\vspace{-0.2cm}
\end{equation}
with $E_{i_n}(\mathbf{Q})$ defined in (\ref{energy}). If some fairness has to be guaranteed among the MUs, other objective functions of the MUs' energies  $E_{i_n}(\mathbf{Q})$ can be used, including the weighted sum, the (weighted) geometric mean, etc.. As a case-study, in the following, we  will focus on the minimization of the sum-energy ${E}(\mathbf{Q})$, but the proposed algorithmic framework can be readily applied to the alternative aforementioned  functions.

Each MU  $i_n$ is subject to the power budget constraint  \eqref{eq:set_Q_i} and, in case of  offloading, to an overall latency given by \vspace{-0.1cm}
\begin{equation}g_{i_n}(\mathbf{Q},f_{i_n})\triangleq \dfrac{c_{i_n}}{r_{i_n}(\mathbf{Q})} + \dfrac{w_{i_n}}{f_{i_n}}- \tilde{T}_{i_n}\leq 0.\label{MUs_latency}\vspace{-0.1cm}
\end{equation}

The offloading problem in the multi-cell scenario is then formulated as follows:
  \begin{equation}
 \begin{array}{llll}
\underset{\mathbf{Q},  \mathbf{f}}{\min}
\quad  {E}(\mathbf{Q})\\
\left.\begin{array}{llll} \mbox{\!\!s.t.} & \texttt{a)}\,
g_{i_n}(\mathbf{Q},f_{i_n})\leq 0,\,\,\forall {i_n\in \mathcal I},\medskip \\
& \texttt{b)}\,  \displaystyle{\sum_{i_n\in \mathcal I}}f_{i_n}\leq f_T,\quad f_{i_n}\geq 0,\quad \forall {i_n\in \mathcal I},\medskip\\
& \texttt{c)}\,  \mathbf{Q}_{i_n}\in \mathcal{Q}_{i_n},\quad \forall {i_n\in \mathcal I},\end{array}\right\} \triangleq \mathcal{X} &
\end{array}\label{prob_ener}\tag{$\mathcal P$}
\end{equation}
where   a) represent the users' latency constraints $\Delta_{i_n}\leq {T}_{i_n}$ with  $\tilde{T}_{i_n}\triangleq {T}_{i_n}-\Delta_{i_n}^\texttt{tx/rx}$;
and    the constraint in b) is due to the limited cloud computational resources to be allocated among the MUs.

\noindent \textbf{Feasibility}:  The  following conditions are sufficient  for $\mathcal X$ to be nonempty and thus for offloading to be feasible: $\tilde{T}_{i_n}>0$ for all ${i_n\in \mathcal I}$, and there exists a $\bar{\mathbf{Q}}\triangleq (\bar{\mathbf{Q}}_{i_n})_{i_n\in \mathcal I}\in \mathcal{Q}$ such that
\begin{equation}
\tilde{T}_{i_n}>  \ds \frac{c_{i_n}}{r_{i_n}(\bar{\mathbf{Q}})},\,\forall i_n\in \mathcal I,\quad \text{and}\quad    \ds  \sum_{i_n\in \mathcal I} \ds \frac{w_{i_n}}{\tilde{T}_{i_n}-\ds \frac{c_{i_n}}{{r_{i_n}}(\bar{\mathbf{Q}})}}\leq f_T.\label{feasibility_multicell}
\end{equation}

 Problem $\mathcal{P}$ is nonconvex, due to the nonconvexity of the objective function and the constraints a).  In what follows we  exploit the structure of $\mathcal{P}$ and,  building on some recent Successive Convex Approximation (SCA) techniques proposed in \cite{Scutari_ICASSP14,Scutari_nonconvex}, we develop a fairly general class of  efficient  approximation algorithms, all  converging to a local optimal solution of $\mathcal{P}$. The numerical results will show that the proposed algorithms converge in a few iterations to ``good'' locally optimal solutions of  $\mathcal{P}$ (that turn out to be quite insensitive to the initialization). The main algorithmic framework, along with its convergence properties, is introduced in Sec. \ref{section:Algorithmic_design}; alternative distributed implementations are studied in Sec. \ref{section:decentralized}.\vspace{-0.3cm}

 \subsection{Algorithmic design}
 \label{section:Algorithmic_design}
   To solve the non-convex problem $\mathcal{P}$ efficiently, we develop a SCA-based method where $\mathcal{P}$ is replaced by  a sequence of \emph{strongly convex}   problems. At the basis of the proposed technique, there is a suitable \emph{convex}  approximation of the nonconvex objective function $E(\mathbf{Q})$ and the constraints $g_{i_n}(\mathbf{Q},f_{i_n})$ around the iterates of the algorithm, which are preliminarily discussed next.
\subsubsection{Approximant of $E(\mathbf{Q})$} Let $\mathbf{Z}\triangleq (\bQ, \mathbf{f})$ and  $\mathbf{Z}^{\nu}\triangleq (\mathbf{Q}^{\nu}, \mathbf{f}^{\nu})$, with $\mathbf{f}\triangleq (f_{i_n})_{i_n \in \mathcal I}$ and  $\mathbf{f}^{\nu}\triangleq (f_{i_n}^{\nu})_{i_n \in \mathcal I}$. Let $\mathcal E \supseteq \mathcal X$ be any closed convex set containing $\mathcal X$ such that $E(\bQ)$ is well-defined on it. Note that such a set exits. For instance, noting that  at  every (feasible) $(\mathbf{Q}, \mathbf{f})\in \mathcal X$, it must be $r_{i_n}(\mathbf{Q})>0$,  $f_{i_n}>0$, for all $i$ and $n$. Hence, condition $g_{i_n}(\mathbf{Q},f_{i_n})\leq 0$ in \ref{prob_ener} can be equivalently rewritten as
\[
r_{i_n}(\mathbf{Q})  \geq  \alpha_{i_n}(f_{i_n})\triangleq \frac{c_{i_n}\cdot f_{i_n}}{f_{i_n}\cdot \tilde{T}_{i_n}-w_{i_n}}>0,
\]
so that one can choose  $\mathcal E\triangleq \{ (\bQ,\mathbf{f}) : \text{b)},\, \text{c)}\, \text{hold},\, \, r_{i_n}(\bQ_{i_n},\bQ_{-i_n}=\mathbf{0})\geq  \alpha_{i_n}(f_{i_n}), \,\forall i_n\in \mathcal I \}$.

 Following \cite{Scutari_ICASSP14,Scutari_nonconvex}, our goal is to
build, at each iteration $\nu$,  an approximant, say $\tilde{E}(\mathbf{Z};\mathbf{Z}^\nu)$,  of the   nonconvex (nonseparable)  $E(\mathbf{Q})$   around the current (feasible) iterate $\mathbf{Z}^\nu \in \mathcal{X}$ that enjoys the following key properties: 
\begin{description}
\item[P1:]
  $\tilde{E}(\bullet;\mathbf{Z}^\nu)$  is uniformly  \emph{strongly convex} on $\mathcal E \times \mathbb{R}_+^{|\mathcal I|}$;\smallskip

\item[P2:] $\nabla_{\mathbf Q^\ast} \tilde{E}(\bZ^{\nu};\bZ^{\nu})=\nabla_{\bQ^\ast}{E}(\bQ^{\nu})$,  $\forall\bZ^{\nu}\in \mathcal X$; \smallskip

\item[P3:] $\nabla_{\bZ^\ast} \tilde{E}(\bullet;\bullet)$ is Lipschitz continuous on $\mathcal E \times \mathbb{R}_+^{|\mathcal I|} \times \mathcal X$; \end{description}
where $\nabla_{\bZ^\ast} \tilde{E}(\bullet;\bullet)$ denotes the conjugate gradient of $\tilde{E}$ with respect to $\bZ$.  Conditions P1-P2 just guarantee that the candidate approximation $\tilde{E}(\bullet;\mathbf{Z}^\nu)$ is strongly convex while preserving the  same first order behaviour of  $E(\mathbf{Q})$ at any iterate  $\mathbf{Q}^\nu $; P3 is a standard continuity requirement.

We build next a  $\tilde{E}(\mathbf{Z};\mathbf{Z}^\nu)$ satisfying P1-P3. Observe that i) for any given $\mathbf{Q}_{-n}=\mathbf{Q}_{-n}^\nu$, each term   $E_{i_n}(\mathbf{Q}_{i_n},\mathbf{Q}_{-n}^\nu)=\text{tr}(\bQ_{i_n})\cdot \Delta_{i_n}^{\texttt{t}}(\mathbf{Q}_{i_n},\mathbf{Q}_{-n}^\nu)$ of the sum in $E(\mathbf{Q})$   [cf. (\ref{MUs_energy})]
 is the product of two convex functions in  $\mathbf{Q}_{i_n}$ [cf. (\ref{energy})], namely: $\text{tr}(\bQ_{i_n})$ and $\Delta_{i_n}^{\texttt{t}}(\mathbf{Q}_{i_n},\mathbf{Q}_{-n}^\nu)$;  and ii)  the other terms of the sum$-\sum_{j_m \in \mathcal{I}, m\neq n} E_{j_m}(\mathbf{Q}_{i_n},\mathbf{Q}_{-i_n,j_m}^{\nu})$ with $\mathbf{Q}_{-i_n,j_m}^{\nu} \triangleq (\mathbf{Q}_{j_{m}}^{\nu},(\mathbf{Q}_{l_q}^{\nu})_{\forall l,q \neq m, l_q \neq i_n})-$are not convex  in $\mathbf{Q}_{i_n}$. Exploiting such a structure, a convex approximation of  $E(\mathbf{Q})$  can be obtained for each MU $i_n$  by   convexifying the  term $\text{tr}(\bQ_{i_n})\cdot \Delta_{i_n}^{\texttt{t}}(\mathbf{Q}_{i_n},\mathbf{Q}_{-n}^\nu)$  and    linearizing  the nonconvex part $ \sum_{j_m \in \mathcal{I}, m\neq n} E_{j_m}(\mathbf{Q}_{i_n};\mathbf{Q}_{-i_n, j_m}^{\nu})$. More formally, denoting  $\mathbf{Z}_{i_n}\triangleq (\bQ_{i_n}, f_{i_n})$, for each $i_n$, let us introduce the ``approximation'' function  $\tilde{E}_{i_n}(\mathbf{Z}_{i_n};\mathbf{Q}^\nu)$:
\begin{equation}\label{E_tilde_in}
\hspace{-0.1cm}\begin{array}{lll} \tilde{E}_{i_n}(\mathbf{Z}_{i_n};\mathbf{Z}^\nu)\!\!\!\!\!&\triangleq & \!\!\!\!   \dfrac{c_{i_n}\cdot \text{tr}(\mathbf{Q}_{i_n})}{r_{i_n}(\mathbf{Q}_{i_n}^\nu,\mathbf{Q}_{-n}^\nu)} +\dfrac{c_{i_n}\cdot \text{tr}(\mathbf{Q}_{i_n}^\nu)}{r_{i_n}(\mathbf{Q}_{i_n},\mathbf{Q}_{-n}^\nu)}\medskip \\
\!\!&&\!\!\!\! +\displaystyle{\sum_{j_m \in \mathcal{I}, m\neq n }} \left\langle\nabla_{\mathbf{Q}_{i_n}^\ast}E_{j_m}(\mathbf{Q}^\nu),  \mathbf{Q}_{i_n}-\mathbf{Q}_{i_n}^\nu\right\rangle \medskip\\
\!\!&& \!\!\!\! + {\tau_{i_n}} \, \|\mathbf{Q}_{i_n}-\mathbf{Q}_{i_n}^\nu \|^2 +\dfrac{{c_{f_{i_n}}}}{2}  \, (f_{i_n}-f_{i_n}^\nu)^2
\end{array}
\end{equation}
 where: the first two terms on the right-hand side are the aforementioned convexification of $\text{tr}(\bQ_{i_n})\cdot \Delta_{i_n}^{\texttt{t}}(\bQ_{i_n},\bQ_{-i_n}^\nu)$; the third term comes from the linearization of  $\sum_{j_m \in \mathcal{I}, m\neq n} E_{j_m}(\mathbf{Q}_{i_n};\mathbf{Q}_{-i_n, j_m}^{\nu})$, with  $\left\langle \mathbf{A}, \mathbf{B}\right\rangle\triangleq  \text{Re}\{ \text{tr} (\mathbf{A}^H\mathbf{B})\}$ and $\nabla_{\mathbf{Q}_{i_n}^\ast}E_{j_m}(\mathbf{Q}^\nu)$ denoting the conjugate gradient of $E_{j_m}(\mathbf{Q})$ with respect to $\mathbf{Q}_{i_n}$ evaluated at $\mathbf{Q}^\nu$, and given by
 \begin{equation}
\begin{array}{lll}\nabla_{\mathbf{Q}_{i_n}^\ast}E_{j_m}(\mathbf{Q}^\nu)=
\dfrac{\text{tr}(\mathbf{Q}^\nu_{j_m})\Delta^{t}_{j_m}(\mathbf{Q}^\nu)}{\log(2) r_{j_m}(\mathbf{Q}^\nu)} \cdot  \left[ \mathbf{H}^{ H}_{i_n m} \left({\mathbf{R}}_{m}(\mathbf{Q}_{-m}^\nu)^{-1} \right. \right. \\ \quad \quad \quad \left. \left.  -(\mathbf{R}_{m}(\mathbf{Q}_{-m}^\nu)+\mathbf{H}_{j_m m} \mathbf{Q}_{j_m}^\nu \mathbf{H}^{H}_{j_m m})^{-1}  \right)\mathbf{H}_{i_n m} \right] ;\end{array}
\end{equation}
the fourth term in \eqref{E_tilde_in} is a quadratic regularization term added to make  $\tilde{E}_{i_n}(\bullet;\mathbf{Z}^\nu)$ uniformly strongly convex on $\mathcal E \times \mathbb{R}_{+}$. \\
\noindent Based on each $\tilde{E}_{i_n}(\mathbf{Z}_{i_n};\mathbf{Z}^\nu)$, we can now define the candidate sum-energy approximation   $\tilde{E}(\mathbf{Z};\mathbf{Z}^\nu)$ as: given $\mathbf{Z}^\nu \in \mathcal X$,
\begin{equation}\label{E_tilde}\vspace{-0.3cm}
\tilde{E}(\mathbf{Z};\mathbf{Z}^\nu)\triangleq \displaystyle{\sum_{i_n \in \mathcal{I}}} \tilde{E}_{i_n}(\mathbf{Z}_{i_n};\mathbf{Z}^\nu).
\end{equation}

It is not difficult to check that   $\tilde{E}(\mathbf{Z};\mathbf{Z}^{\nu})$ satisfies P1-P3; in particular it is strongly convex on $\mathcal E \times \mathbb{R}_+^{|\mathcal I|}$  with constant $c_{\tilde{E}}\geq \min_{i_n \in \mathcal I} (\min(\tau_{i_n},c_{f_{i_n}}))\!\!>\!0$. Note that $\tilde{E}(\mathbf{Z};\mathbf{Z}^{\nu})$ is also
 separable in the users variables $\mathbf{Z}_{i_n}$, which is instrumental to obtain distributed algorithms across the   SCeNBs, see Sec.  \ref{section:decentralized}.

\subsubsection{Inner convexification of the constraints $g_{i_n}(\mathbf{Q},f_{i_n})$}
We aim at introducing an inner convex approximation, say  $\tilde{g}_{i_n}(\mathbf{Q},f_{i_n};\mathbf{Z}^\nu)$,  of the constraints $g_{i_n}(\mathbf{Q},f_{i_n})$ around $\mathbf{Z}^{\nu}\in \mathcal{X}$,   satisfying the following key properties (the proof is omitted for lack of space and reported in  Appendix B in the supporting material) \cite{Scutari_ICASSP14,Scutari_nonconvex}:
\begin{description}
   \item [C1:]$\tilde{g}_{i_n} (\bullet; \mathbf{Z}^{\nu})$ is uniformly convex on $\mathcal E \times \mathbb{R}_+$;\smallskip
   \item [C2:] $\nabla_{\mathbf{Z}^{\ast}}\tilde{g}_{i_n} (\bQ^\nu,f_{i_n}^\nu; \mathbf{Z}^\nu)=\nabla_{\mathbf{Z}^{\ast}}{g}_{i_n} (\bQ^\nu,f_{i_n}^\nu)$,
   $\forall \mathbf{Z}^\nu \in \mathcal{X}$;\smallskip
    \item [C3:] $\nabla_{\mathbf{Z}^{^\ast}}\tilde{g}_{i_n} (\bullet;\bullet)$  is continuous on
   $\mathcal E \times \mathbb{R}_+\times  \mathcal{X}$;\smallskip
    \item [C4:] $ \tilde{g}_{i_n} (\mathbf{Q},{f}_{i_n}; \mathbf{Z}^\nu) \geq g_{i_n} (\mathbf{Q},{f}_{i_n})$,
  $\forall (\mathbf{Q},{f}_{i_n})\in \mathcal E \times \mathbb{R}_+$ and  $\forall\mathbf{Z}^\nu \in \mathcal{X}$;\smallskip

  \item [C5:]  $ \tilde{g}_{i_n} (\mathbf{Q}^\nu,f_{i_n}^\nu; \mathbf{Z}^\nu) = g_{i_n} (\mathbf{Q}^\nu,f_{i_n}^\nu)$,
   $\forall \mathbf{Z}^\nu \in \mathcal{X}$;\smallskip

     \item [C6:] $\tilde{g}_{i_n} (\bullet; \bullet)$  is Lipschitz continuous on
   $\mathcal E \times \mathbb{R}_+\times \mathcal{X}$.
 \end{description}

Conditions C1-C3 are the counterparts of P1-P3 on $\tilde{g}_{i_n}$; the extra condition C4-C5 guarantee that $\tilde{g}_{i_n}$ is an inner approximation of ${g}_{i_n}$, implying that any $(\bQ,f_{i_n})$ satisfying $\tilde{g}_{i_n}(\bQ, f_{i_n};\bZ^\nu)\leq 0$ is feasible also for the original nonconvex problem $\mathcal P$.

 To build a   $\tilde{g}_{i_n}$ satisfying C1-C6, let us exploit first the  concave-convex structure of the rate functions $r_{i_n}(\mathbf{Q})$ [cf. (\ref{rate})]:
\begin{equation}\label{DC_rate}
r_{i_n}(\mathbf{Q})=r_{i_n}^{\,\text{+}}(\mathbf{Q})+r_{n}^{\,\text{-}}(\mathbf{Q}_{-n}),
\end{equation}
where
\begin{equation}\label{r_lus}\begin{array}{lll}
r_{i_n}^{\,\text{+}}(\mathbf{Q})\triangleq  \log_2 \det \left(\mathbf{R}_{n}(\mathbf{Q}_{-n})+\mathbf{H}_{i_n n}\mathbf{Q}_{i_n}\mathbf{H}_{i_n n}^H  \right)\medskip \\
r_{n}^{\,\text{-}}(\mathbf{Q}_{-n})\triangleq  -
\log_2 \det \left(\mathbf{R}_{n}(\mathbf{Q}_{-n})\right)\smallskip
\end{array}
\end{equation}
with    $\mathbf{R}_{n}(\mathbf{Q}_{-n})$  defined in (\ref{eq:MUI}). Note that  $r_{i_n}^{\,\text{+}}(\bullet)$ and $r_{n}^{\,\text{-}}(\bullet)$ are concave on   $\mathcal{Q}$ and convex on  $\mathcal{Q}_{-n}\triangleq \prod_{m\neq n} \mathcal{Q}_m$, respectively.  Using \eqref{DC_rate}, and observing that at any  (feasible) $(\mathbf{Q}, \mathbf{f})\in \mathcal X$, it must be  $r_{i_n}(\mathbf{Q})>0$ and $f_{i_n}>0$ for all $i$ and $n$, the constraints $g_{i_n}(\mathbf{Q},f_{i_n})\leq 0$ in  \ref{prob_ener} can be equivalently rewritten as
\begin{equation} \label{g_DC}
g_{i_n}(\mathbf{Q},f_{i_n}) = - r_{i_n}^{\,\text{+}}(\mathbf{Q}) - r_{n}^{\,\text{-}}(\mathbf{Q}_{-n}) + \frac{c_{i_n}\cdot f_{i_n}}{f_{i_n}\cdot \tilde{T}_{i_n}-w_{i_n}}\leq 0,
\end{equation}
where with a slight abuse of notation we used the same symbol $g_{i_n}(\mathbf{Q},f_{i_n})$ to denote the constraint in the equivalent form.

  The desired inner convex approximation  $\tilde{g}_{i_n}(\mathbf{Q},f_{i_n};\mathbf{Z}^\nu)$ is obtained from $g_{i_n}(\mathbf{Q},f_{i_n})$ by retaining the convex part in (\ref{g_DC}) and linearizing the concave term $- r_{n}^{\,\text{-}}(\mathbf{Q}_{-n})$, resulting in:
\begin{equation}\label{g_tilde}
\begin{array}{lll}
\!\!\!\tilde{g}_{i_n}(\mathbf{Q},f_{i_n};\mathbf{Z}^\nu)\!\!\!& \!\!\!\triangleq
- r_{i_n}^{\,\text{+}}(\mathbf{Q}) + \dfrac{c_{i_n}\cdot f_{i_n}}{f_{i_n}\cdot \tilde{T}_{i_n}-w_{i_n}} \medskip\\
\!\!\!& \hspace{0.2cm} - r_{n}^{\,\text{-}}(\mathbf{Q}_{-n}^\nu)\! - \!\!\!\!\displaystyle{\sum_{j_m \in \mathcal{I}}} \!\! \left\langle \boldsymbol{\Pi}^{\,\text{-}}_{j_m,n}(\mathbf{Q}^{\nu}),\! \mathbf{Q}_{j_m}-\mathbf{Q}_{j_m}^{\nu} \right\rangle
\end{array}
\end{equation}
where each $\boldsymbol{\Pi}^{\,\text{-}}_{j_m,n}(\mathbf{Q}^{\nu})$ is defined as
\beq
   \boldsymbol{\Pi}_{j_{m},n}^{\,\text{-}}(\mathbf{Q}^{\nu})\triangleq\left\{ \begin{split}&\nabla_{\mathbf{Q}_{j_{m}}^{\ast}}r_{n}^{\,\text{-}}(\mathbf{Q}_{-n}^{\nu}), & \text{if}\, m\neq n;\qquad\quad\qquad\\
&\mathbf{0}, & \text{otherwise};\,\,\qquad\qquad\,\,
\end{split}
\right.
\eeq
and $\nabla_{\mathbf{Q}_{j_{m}}^{\ast}}r_{n}^{\,\text{-}}(\mathbf{Q}_{-n}^{\nu})=-\mathbf{H}^{H}_{j_m n}\mathbf{R}_{n}(\mathbf{Q}_{-n}^\nu)^{-1}\mathbf{H}_{j_m n}$.\smallskip
\subsubsection{Inner SCA algorithm: centralized implementation}
\label{section:centralized}
  We are now ready to introduce the proposed inner convex approximation of the nonconvex problem $\mathcal{P}$, which consists in replacing the nonconvex objective function $E(\mathbf{Q})$ and constraints $g_{i_n}(\mathbf{Q},f_{i_n})\leq 0$ in $\mathcal{P}$ with  the approximations $\tilde{E}(\mathbf{Z};\mathbf{Z}^\nu)$ and  $\tilde{g}_{i_n}(\mathbf{Q},f_{i_n};\mathbf{Z}^\nu)\leq 0$, respectively. More formally, given the feasible point $\mathbf{Z}^\nu$,   we have
\begin{equation}
\widehat{\mathbf{Z}}(\mathbf{Z}^\nu)\triangleq \begin{array}[t]{clll}
 \underset{\mathbf{Q},  \mathbf{f}}{ \text{argmin}}
\!\!& \! \,\tilde{E}(\mathbf{Q};\mathbf{Q}^\nu)\vspace{-1.7cm}\\
\mbox{ s.t.} \!\!& \!\!   \begin{array}{llll} {}\vspace{1.3cm}\\ \texttt{a)}\,\,
\tilde{g}_{i_n}(\mathbf{Q},f_{i_n};\mathbf{Z}^\nu)\leq 0,\,\,\forall  i_n\in \mathcal I,\medskip \\\texttt{b)}\,\, \displaystyle{\sum_{i_n\in \mathcal I}}f_{i_n}\leq f_T,\quad f_{i_n}\geq 0,\,\, \forall  i_n\in \mathcal I,\medskip\\\texttt{c)}\,\, \mathbf{Q}_{i_n}\in \mathcal{Q}_{i_n},\quad \forall i_n\in \mathcal I,\end{array}
\end{array} \label{prob_cvx}\tag{$\mathcal{P}^\nu$}\vspace{-0.1cm}
\end{equation}
where we denoted by $\widehat{\mathbf{Z}}(\mathbf{Z}^\nu)\triangleq (\widehat{\mathbf{Q}}(\mathbf{Z}^\nu),\widehat{\mathbf{f}}(\mathbf{Z}^\nu))$ the unique solution of the strongly convex optimization problem.

 The proposed solution consists in solving the sequence of problems \ref{prob_cvx}, starting from a feasible $\mathbf{Z}^0\triangleq (\mathbf{Q}^0, \mathbf{f}^0)$. The formal description of the method is given in Algorithm   \ref{alg:Alg_centr}, which is proved to converge to local optimal solutions of the original nonconvex problem $\mathcal{P}$  in Theorem \ref{thm:centralized}.
 Note that in Step 3 of the algorithm we include a memory in the update of the iterate $\mathbf{Z}^{\nu}\triangleq (\mathbf{Q}^\nu, \mathbf{f}^\nu)$. A practical termination criterion in Step
1 is   $|E(\mathbf{Q}^{\nu+1})-E(\mathbf{Q}^{\nu})|\leq \delta$, where $\delta>0$
is the prescribed accuracy.

     \begin{algorithm}[H]
\textbf{Initial  data:} $\mathbf{Z}^0\triangleq(\mathbf{Q}^{0},\mathbf{f}^0)\in \mathcal{X}$; $\{\gamma^{\nu}\}_\nu \in (0,1]$; 

(\texttt{S.1}): If $\mathbf{Z}^{\nu}$ satisfies a suitable termination criterion, \texttt{STOP}

(\texttt{S.2}): Compute    $\hat{\mathbf{Z}}(\mathbf{Z}^{\nu})\triangleq (\hat{\mathbf{Q}}(\mathbf{Z}^{\nu}),\hat{\mathbf{f}}(\mathbf{Z}^{\nu}))$ [cf. $\mathcal{P}^{\nu}$];

(\texttt{S.3}):  Set $\mathbf{Z}^{\nu+1}=\mathbf{Z}^{\nu}+\gamma^{\nu}\left(\hat{\mathbf{Z}}(\mathbf{Z}^{\nu})-\mathbf{Z}^{\nu}\right)$;

(\texttt{S.4}):   $\nu \leftarrow \nu+1$  and go  to (\texttt{S.1}).

\caption{\textbf{:}  Inner SCA  Algorithm for  $\mathcal{P}$  \label{alg:Alg_centr}}

\end{algorithm}
\begin{theorem} \label{thm:centralized}{\it
Given the nonconvex problem $\mathcal{P}$,   choose $c_{\tilde{E}}>0$ and $\{\gamma^{\nu}\}_\nu$ such that
  \begin{equation}
(0,1]\ni\gamma^{\nu}\rightarrow 0, \,\forall \nu\geq 0,\quad \mbox{and}\quad\sum_{\nu}\gamma^{\nu}=+\infty.\label{eq:diminishing_step_size}
 \end{equation}
 Then every  limit point of $\{\mathbf{Z}^{\nu}\}$ (at least one of such points exists) is a stationary solution
of $\mathcal{P}$. Furthermore, none of such points is a local maximum of the energy function  $E$.}
\end{theorem}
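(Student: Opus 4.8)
The plan is to follow the SCA machinery of \cite{Scutari_ICASSP14,Scutari_nonconvex}, organizing the argument around the best-response map $\mathbf{Z}^\nu \mapsto \widehat{\mathbf{Z}}(\mathbf{Z}^\nu)$, which is single-valued and well defined because $\mathcal{P}^\nu$ is strongly convex (properties P1 and C1). The proof then reduces to four ingredients: (i) feasibility of the whole iterate sequence; (ii) the descent property of the direction generated by $\mathcal{P}^\nu$; (iii) a sufficient-decrease recursion driven by the diminishing step-size; and (iv) the identification of limit points as KKT points of $\mathcal{P}$ via the first-order consistency conditions, together with the exclusion of local maxima.

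First, feasibility. By the inner-approximation properties C4--C5, the feasible set of $\mathcal{P}^\nu$ is convex, is contained in $\mathcal{X}$ (since $\tilde g_{i_n}(\mathbf{Q},f_{i_n};\mathbf{Z}^\nu)\le 0$ implies $g_{i_n}(\mathbf{Q},f_{i_n})\le 0$, while constraints b) and c) are unchanged), and contains the current iterate $\mathbf{Z}^\nu$ itself because $\tilde g_{i_n}(\mathbf{Z}^\nu;\mathbf{Z}^\nu)=g_{i_n}(\mathbf{Z}^\nu)\le 0$. Hence both $\mathbf{Z}^\nu$ and $\widehat{\mathbf{Z}}(\mathbf{Z}^\nu)$ lie in this convex subset of $\mathcal{X}$, so the convex combination $\mathbf{Z}^{\nu+1}$ in Step 3 (with $\gamma^\nu\in(0,1]$) remains feasible; by induction the sequence stays in $\mathcal{X}$, which, being contained in the compact set $\mathcal{Q}\times[0,f_T]^{|\mathcal I|}$, is bounded and thus admits limit points.

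Second, descent and convergence. Writing $\Delta\mathbf{Z}^\nu\triangleq\widehat{\mathbf{Z}}(\mathbf{Z}^\nu)-\mathbf{Z}^\nu$, I would combine the optimality condition of the strongly convex problem $\mathcal{P}^\nu$ with the gradient-consistency property P2 to obtain
\[
\left\langle \nabla_{\mathbf{Z}^\ast} E(\mathbf{Z}^\nu),\, \Delta\mathbf{Z}^\nu\right\rangle \le -c_{\tilde E}\,\|\Delta\mathbf{Z}^\nu\|^2,
\]
so that $\Delta\mathbf{Z}^\nu$ is a genuine descent direction for $E$ whenever it is nonzero. Invoking the descent lemma together with the Lipschitz continuity of $\nabla E$ on the compact feasible set (property P3) and the update in Step 3 yields the sufficient-decrease recursion
\[
E(\mathbf{Z}^{\nu+1}) \le E(\mathbf{Z}^\nu) - \gamma^\nu\Big(c_{\tilde E} - \tfrac{L}{2}\gamma^\nu\Big)\|\Delta\mathbf{Z}^\nu\|^2 .
\]
Since $\gamma^\nu\to 0$, the bracketed term is eventually bounded below by $c_{\tilde E}/2>0$; as $E$ is bounded below, telescoping together with $\sum_\nu\gamma^\nu=+\infty$ gives $\liminf_\nu\|\Delta\mathbf{Z}^\nu\|=0$. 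I would then upgrade this to $\lim_\nu\|\Delta\mathbf{Z}^\nu\|=0$ using continuity of the best-response map (a consequence of strong convexity plus C3, C6, and P3 via Berge's maximum theorem) and the standard diminishing-step argument exploiting $\|\mathbf{Z}^{\nu+1}-\mathbf{Z}^\nu\|=\gamma^\nu\|\Delta\mathbf{Z}^\nu\|\to 0$.

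Consequently every limit point $\bar{\mathbf{Z}}$ is a fixed point, $\widehat{\mathbf{Z}}(\bar{\mathbf{Z}})=\bar{\mathbf{Z}}$. Writing the KKT conditions of the convex surrogate $\mathcal{P}^\nu$ at such a fixed point and using the first-order matching P2 and C2 (so that $\nabla\tilde E$ and $\nabla\tilde g_{i_n}$ coincide there with $\nabla E$ and $\nabla g_{i_n}$), these conditions collapse exactly onto the KKT system of $\mathcal{P}$; a constraint qualification (Slater/MFCQ, available under the strict-feasibility assumption \eqref{feasibility_multicell}) legitimizes this characterization and proves stationarity. To rule out local maxima, I would use that $E$ is strictly decreased at every step with $\Delta\mathbf{Z}^\nu\neq\mathbf{0}$: the monotone decrease of $\{E(\mathbf{Z}^\nu)\}$ toward $E(\bar{\mathbf{Z}})$ is incompatible with $\bar{\mathbf{Z}}$ being a local maximum, since the descent direction would force strictly smaller values in every neighborhood. \emph{Main obstacle:} I expect the delicate points to be passing from $\liminf$ to $\lim$ of the gap $\|\Delta\mathbf{Z}^\nu\|$ (requiring the uniform-continuity-type argument and continuity of the best-response map) and verifying the constraint qualification that makes the surrogate KKT system coincide with that of $\mathcal{P}$; the feasibility and descent bookkeeping are routine by comparison.
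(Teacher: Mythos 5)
Your proof follows essentially the same route as the paper's (the paper defers the details to its supplementary material, but states explicitly that the algorithm and its convergence rest on the SCA framework of the cited references): feasibility of the iterates via the inner-approximation properties C4--C5, the descent inequality $\langle \nabla_{\mathbf Z^\ast}E(\mathbf Z^\nu),\widehat{\mathbf Z}(\mathbf Z^\nu)-\mathbf Z^\nu\rangle\le -c_{\tilde E}\|\widehat{\mathbf Z}(\mathbf Z^\nu)-\mathbf Z^\nu\|^2$ from strong convexity plus P2, the diminishing-step telescoping, and the identification of fixed points of the best-response map with KKT points of $\mathcal P$ via P2/C2 and the MFCQ regularity of all feasible points recorded in the paper. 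The one step I would tighten is the exclusion of local maxima: your monotone-decrease argument disposes of the case where iterates approach $\bar{\mathbf Z}$ with strictly larger objective values, but the degenerate case in which the sufficient-decrease bound forces $\widehat{\mathbf Z}(\mathbf Z^\nu)=\mathbf Z^\nu$ and the sequence sits at $\bar{\mathbf Z}$ from some iteration on requires a separate (short) argument that such a fixed point of the strongly convex, gradient-consistent surrogate cannot be a local maximizer of $E$.
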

\begin{proof}
The proof is omitted for lack of space
and reported in Appendix B of the supporting material.
\end{proof}

Theorem
\ref{thm:centralized} offers some flexibility in the choice of the free parameters  $c_{\tilde{E}}$ and $\{\gamma^{\nu}\}_\nu$  while
guaranteeing convergence of Algorithm  \ref{alg:Alg_centr}. For instance, $c_{\tilde{E}}$ is positive if all $\tau_{i_n}$ and $c_{f_{i_n}}$ are positive (but arbitrary); in the case of full-column rank matrices $\mathbf{H}_{i_n n}$, one can also set $\tau_{i_n}=0$ (still resulting in  $c_{\tilde{E}}>0$). 
Many choices are possible for the step-size  $\gamma^{\nu}$; a practical rule satisfying (\ref{eq:diminishing_step_size})  that we found effective in our experiments is  \cite{scutari_facchinei_et_al_tsp13}: \begin{equation}\gamma^{\nu+1}=\gamma^{\nu}(1-\alpha \gamma^{\nu}), \quad  \gamma^{0}\in (0,1],
\end{equation} with $\alpha\in \left(0,1/{\gamma^{0}}\right)$.\smallskip

\noindent \emph{On the implementation of Algorithm 2:}
 Since the base stations are connected to the cloud throughout high speed wired links, a good candidate place to run Algorithm 2 is the cloud itself: The cloud collects first  all system parameters needed to run the algorithm from the SCeNBs (MUs' channel state information, maximum tolerable latency,  etc.); then, if the feasibility conditions \eqref{feasibility_multicell} are satisfied, the cloud solves the strongly convex problems $\mathcal P^\nu$ (using any standard nonlinear programming solver), and sends the solutions  $\bQ_{n}$ back to the corresponding SCeNBs; finally, each SCeNB communicates the optimal transmit parameters to the MUs it is serving.\smallskip

\noindent \emph{Related works:}  Algorithm 2 hinges on the idea of  successive convex programming, which aims at computing stationary solutions of some classes of nonconvex problems by solving a sequence of convexified subproblems. Some relevant instances of this method  that have attracted significant interest in recent years are: i) the basic DCA (Difference-of-Convex Algorithm) \cite{ThiTao05,AlvaradoScutariPange14}; ii) the M(ajorization)-M(inimization) algorithm \cite{HunterLange04,Sriperumbudur09}; iii)  alternating/successive minimization methods \cite{Boyd13,razaviyayn2013unified,scutari_facchinei_et_al_tsp14}; and  iv)  partial linearization methods \cite{patriksson1993partial,BolteShohamTeboulle13,scutari_facchinei_et_al_tsp13}.
The aforementioned methods  identify  classes of ``favorable'' nonconvex  functions, for which a suitable convex approximation can be obtained and  convergence of the associated sequential convex programming  method can be proved.  However,  the sum-energy function $E(\mathbf{Q})$ in (\ref{MUs_energy}) and the resulting nonconvex optimization problem  $\cal P$ do not  belong to any of the above classes. More specifically, what makes current algorithms not readily applicable to Problem  $\cal P$ is the lack in the objective function $E(\mathbf{Q})$ of a(n additively) separable convex and nonconvex part [each $E_{i_n}(\mathbf{Q})$ in (\ref{MUs_energy}) is in fact the ratio of two functions, $\text{tr}(\bQ_{i_n})$ and $\Delta_{i_n}^{\texttt{t}}(\mathbf{Q}_{i_n},\mathbf{Q}_{-n}^\nu)$,  of the \emph{same} set of variables]. Therefore, the proposed approximation function $\tilde{E}(\mathbf{Z};\mathbf{Z}^\nu)$, along with the resulting SCA-algorithm, i.e., Algorithm 2, are an innovative  contribution of this work.  \vspace{-0.3cm}

\section{Distributed Implementation}
\label{section:decentralized}
To alleviate the communication overhead of a centralized
implementation (Algorithm 2), in this section we devise \emph{distributed} algorithms converging to local optimal solutions of  $\mathcal P$. Following \cite{Scutari_nonconvex}, the main idea is to choose the approximation functions $\tilde{E}$ and $\tilde{g}_{i_n}$ so that (on top of satisfying conditions P.1-P.3 and C.1-C.6, needed for convergence) the resulting  convexified problems $\mathcal P^\nu$ can be decomposed into (smaller) subproblems   solvable in parallel across  the SCeNBs, with limited signaling between the SCeNBs  and the  cloud.

Since the
approximation function $\tilde{E}$ introduced in (\ref{E_tilde}) is (sum) separable in the optimization variables of the MUs in each cell,   any  choice of  $\tilde{g}_{i_n}$'s enjoying the  same  decomposability structure   leads naturally to  convexified problems $\mathcal P^\nu$ that can be readily decomposed across the SCeNBs by using standard primal or dual decomposition techniques.

Of course  there is more than one choice of $\tilde{g}_{i_n}$ meeting the above requirements; all of them lead to \emph{convergent} algorithms that however differ for convergence speed, complexity, communication overhead, and a-priori knowledge of the system parameters. As case study, in the following, we consider  two representative valid approximants.   The first candidate $\tilde{g}_{i_n}$ is obtained exploiting the Lipschitz property of the gradient of the rate functions $r_{i_n}$, whereas the second one is based on  an equivalent reformulation of $\mathcal P$ introducing proper slack variables. The first choice offers a lot of flexibility in
the design of distributed algorithms$-$both primal and dual-based
schemes can be invoked$-$but it requires knowledge of all the
Lipschitz constants. The second choice does not need this knowledge, but it involves a higher computational cost at the SCeNBs side, due to the presence of the slack variables.
\vspace{-0.2cm}
 \subsection{{Per-cell distributed dual and primal decompositions}}
The approximation function $\tilde{g}_{i_n}$  in (\ref{g_tilde})   has the desired property of preserving   the structure of the original constraint function $g_{i_n}$ ``as much as possible'' by keeping the convex part  $r_{i_n}^+(\bQ)$ of $r_{i_n}(\bQ)$ unaltered. Numerical results show that this choice leads to fast convergence schemes, see Sec. \ref{sec:num_res}.  However the structure of $\tilde{g}_{i_n}$  prevents  $\mathcal P^\nu$ to  be decomposed across the SCeNBs due to   the \emph{nonadditive} coupling among the variables $\bQ_n$ in $r_{i_n}^+(\bQ)$.
 To cope with this issue, we   lower bound $r_{i_n}^+(\bQ)$ [and thus upper bound $\tilde{g}_{i_n}$  in (\ref{g_tilde})], so that we  obtain an alternative   approximation of $g_{i_n}$ that is \emph{separable in all} the $\bQ_n$'s,  while still satisfying C.1-C.6.
Invoking the Lipschitz property of the (conjugate) gradients $\nabla_{\mathbf{Q}_{j_{l}}^{*}}r_{i_{n}}^{\,\text{+}}(\bullet)$  on $\mathcal{Q}$,
 with constant $L_{{j_l}, i_n}$ [given in (19)  in Appendix B of the supporting material], we have 
 \beq \label{rin_up}
 \begin{array}{lll}
 r_{i_n}^{\,\text{+}}(\mathbf{Q}) \geq \tilde{r}_{i_n}^{\,\text{+}}(\mathbf{Q};\mathbf{Q}^{\nu})\triangleq r_{i_n}^{\,\text{+}}(\mathbf{Q}^{\nu}) \smallskip\\ +\ds \sum_{j_l \in \mathcal{I}}\left(
 \left\langle \boldsymbol{\Pi}^{\,\text{+}}_{j_l,i_n}(\mathbf{Q}^{\nu}), \mathbf{Q}_{j_l}-\mathbf{Q}^{\nu}_{j_l}\right\rangle-\ds   c_{j_l, i_n} \parallel \mathbf{Q}_{j_l}-\mathbf{Q}^{\nu}_{j_l} \parallel^{2}\right),
\end{array}\nonumber
 \eeq
for all $\bQ, \bQ^\nu\in \mathcal Q$,   where each $\boldsymbol{\Pi}_{j_{l},i_{n}}^{\,\text{+}}(\mathbf{Q}^{\nu})$ and $c_{j_l, i_n}$ are defined respectively as
 \begin{equation}
 \boldsymbol{\Pi}_{j_{l},i_{n}}^{\,\text{+}}(\mathbf{Q}^{\nu})\triangleq \left\{ \begin{split}& \nabla_{\mathbf{Q}_{j_{l}}^{*}}r_{i_{n}}^{\,\text{+}}(\mathbf{Q}^{\nu}), & \text{if }l\neq n\,\,\,\text{or}\,\,\, j_{l}=i_{n},\\
& \mathbf{0}, & \text{otherwise}\qquad\qquad\,\,
\end{split}
\right.
 \end{equation}
 with $\!\nabla_{\mathbf{Q}_{j_{l}}^{*}}r_{i_{n}}^{\,\text{+}}(\mathbf{Q}^{\nu})\!\!=\!\!\mathbf{H}_{j_{l}n}^{H}(\mathbf{R}_{n}(\mathbf{Q}_{-n}^{\nu})+\!
 \mathbf{H}_{i_{n}n}\mathbf{Q}_{i_{n}}^{\nu}\mathbf{H}_{i_{n}n}^{H})^{-1}\mathbf{H}_{j_{l}n}$ and
  \beq
    c_{j_l, i_n}\triangleq  \left\{
    \begin{split}
    &\ds L_{{j_l}, i_n},\, & \text{if }l\neq n\,\,\,\text{or}\,\,\, j_{l}=i_{n},\\ &0,&\text{otherwise}.\qquad\qquad\,
  \end{split} \right.
  \eeq
Note that  $ \tilde{r}_{i_n}^{\,\text{+}}(\mathbf{Q};\mathbf{Q}^{\nu})$ is (sum) separable in the MUs' covariance matrices $\bQ_{i_n}$'s. The desired approximant of $g_{i_n}$ can be then  obtained just replacing  $r_{i_n}^+(\bQ)$ in $\tilde{g}_{i_n}$   with $ \tilde{r}_{i_n}^{\,\text{+}}(\mathbf{Q};\mathbf{Q}^{\nu})$ [cf. (\ref{g_tilde})], resulting in \vspace{-0.2cm}
 \beq \label{g_upper_bound}
\begin{array}{lll}
\tilde{q}_{i_{n}}(\mathbf{Q},f_{i_{n}};\mathbf{Q}^{\nu})\hspace{-0.3cm}& \triangleq -\tilde{r}_{i_{n}}^{\,\text{+}}(\mathbf{Q};\mathbf{Q}^{\nu})+\dfrac{c_{i_{n}}\cdot f_{i_{n}}}{f_{i_{n}}\cdot\tilde{T}_{i_{n}}-w_{i_{n}}}\medskip\\
  & \hspace{0.32cm} -r_{n}^{\,\text{-}}(\mathbf{Q}_{-n}^{\nu})\!-\!\!\!{\displaystyle {\sum_{j_{l}\in\mathcal{I}}}\!\!\left\langle \boldsymbol{\Pi}_{j_{l},n}^{\,\text{-}}(\mathbf{Q}^{\nu}),\mathbf{Q}_{j_{l}}-\mathbf{Q}_{j_{l}}^{\nu}\right\rangle }\medskip\\
\hspace{-0.3cm} & \triangleq   \ds\sum_{j_{l}\in\mathcal{I}}\tilde{q}_{j_{l},i_{n}}(\mathbf{Q}_{j_{l}};\mathbf{Q}^{\nu})+\bar{q}_{i_n}(f_{i_n};\mathbf{Q}^{\nu})
\end{array}
\eeq
with $\tilde{q}_{j_{l},i_{n}}(\mathbf{Q}_{j_{l}};\mathbf{Q}^{\nu})$ and $\bar{q}_{i_n}(f_{i_n};\mathbf{Q}^{\nu})$  given by
   \beq
   \begin{array}{lll}
 \begin{split}
    \! \tilde{q}_{j_l,i_n}(\mathbf{Q}_{j_l}; \mathbf{Q}^{\nu})& \triangleq
 \ds c_{j_l, i_n}  \parallel \mathbf{Q}_{j_l}-\mathbf{Q}^{\nu}_{j_l} \parallel^{2}\\
   &-\left\langle \boldsymbol{\Pi}^{\,\text{+}}_{j_l,i_n}(\mathbf{Q}^{\nu})+
 \boldsymbol{\Pi}^{\,\text{-}}_{j_l,n}(\mathbf{Q}^{\nu}), \mathbf{Q}_{j_l}-\mathbf{Q}^{\nu}_{j_l}\right\rangle,\\
 \end{split}\medskip\\
  \bar{q}_{i_n}(f_{i_n}; \mathbf{Q}^{\nu}) \triangleq \dfrac{c_{i_{n}}\cdot f_{i_{n}}}{f_{i_{n}}\cdot\tilde{T}_{i_{n}}-w_{i_{n}}}
  -{r}_{i_n}(\mathbf{Q}^{\nu}). \end{array}\nonumber
 \eeq
It is not difficult to check that $\tilde{q}_{i_{n}}(\mathbf{Q},f_{i_{n}};\mathbf{Q}^{\nu})$, on top  of being separable in the MUs' covariance matrices,   also satisfies the required conditions C.1-C.6.  Using $\tilde{q}_{i_{n}}(\mathbf{Q},f_{i_{n}};\mathbf{Q}^{\nu})$ instead of $\tilde{g}_{i_{n}}(\mathbf{Q},f_{i_{n}};\mathbf{Q}^{\nu})$, the convexified subproblem replacing $\mathcal P^{\nu}$ is:
 given $\mathbf{Z}^{\nu} \in \mathcal{X}$,
 \begin{equation}
 \begin{array}{lll}
\!\!\widehat{\mathbf{Z}}(\mathbf{Z}^\nu)\triangleq\!\!\!  \begin{array}[t]{clll}
 \underset{\mathbf{Q},  \mathbf{f}}{ \text{argmin}}
\!\!& \!\! \,\displaystyle{\sum_{i_n \in \mathcal{I}}}\, \tilde{E}_{i_n}(\mathbf{Z}_{i_n};\mathbf{Z}^\nu)\vspace{-2cm}\\
\!\!\mbox{ s.t.} \!\!& \!\!    \begin{array}{llll} {}\vspace{1.8cm}\\ \!\! \texttt{a)}\,\,\!\!\!\!
\ds\sum_{j_{l}\in\mathcal{I}}\tilde{q}_{j_{l},i_{n}}(\mathbf{Q}_{j_{l}};\mathbf{Q}^{\nu})+\bar{q}_{i_n}(f_{i_n};\mathbf{Q}^{\nu}) \leq 0,\\
 \hspace{4cm}\forall  i_n\in \mathcal I,\medskip \\ \!\! \texttt{b)}\,\,\!\!\!\! \displaystyle{\sum_{i_n\in \mathcal I}}f_{i_n}\leq f_T,\quad f_{i_n}\geq 0,\,\, \forall  i_n\in \mathcal I,\medskip\\ \!\! \texttt{c)}\,\,\!\! \mathbf{Q}_{i_n}\in \mathcal{Q}_{i_n},\quad \forall i_n\in \mathcal I,\end{array}
\end{array} \label{Lip_problem}\tag{$\mathcal P_d^\nu$}
\end{array}
\end{equation}
where with a slight abuse of notation we still use $\widehat{\mathbf{Z}}(\mathbf{Z}^\nu)\triangleq (\widehat{\mathbf{Q}}(\mathbf{Z}^\nu),\widehat{\mathbf{f}}(\mathbf{Z}^\nu))$ to denote the unique solution of \ref{Lip_problem}.  \\
\indent Problem \ref{Lip_problem} is now (sum) separable in the MUs' covariance matrices; it can be solved in  a distributed way using standard primal or dual decomposition techniques. We briefly show next how to customize standard  dual algorithms to \ref{Lip_problem}.

\subsubsection{Per-cell optimization via dual decomposition}

The subproblems \ref{Lip_problem} can be solved in a distributed way
if the side constraints $\tilde{q}_{i_{n}}(\mathbf{Q},f_{i_{n}};\mathbf{Q}^{\nu})\leq 0$  are dualized (note that there is zero duality gap). The dual problem associated with \ref{Lip_problem} is: given $\mathbf{Z}^{\nu}\triangleq ( \mathbf{Q}^{\nu},  \mathbf{f}^{\nu})\in {\mathcal{X}}$,
\beq
\underset{\boldsymbol{\lambda}\triangleq ((\lambda_{i_n})_{i_n \in \mathcal{I}}, \lambda_f)\geq \mathbf{0}}\max \quad D\left(\hat{\mathbf{Z}}(\boldsymbol{\lambda}; \mathbf{Z}^{\nu}), \boldsymbol{\lambda}; \mathbf{Z}^{\nu}\right) \label{distr_dual}
\eeq
where  $\hat{\mathbf{Z}}(\boldsymbol{\lambda};\bZ^\nu)\triangleq (\hat{\mathbf{Z}}_{n}(\boldsymbol{\lambda};\mathbf{Z}^{\nu}))_{n=1}^{N_c}$, with each $\hat{\mathbf{Z}}_{n}(\boldsymbol{\lambda};\mathbf{Z}^{\nu})\triangleq (\hat{\mathbf{Q}}_{n}(\boldsymbol{\lambda};\mathbf{Z}^{\nu}),\hat{\mathbf{f}}_{n}(\boldsymbol{\lambda};\mathbf{Z}^{\nu}))=(\hat{\mathbf{Q}}_{i_n}(\boldsymbol{\lambda};\mathbf{Z}^{\nu}),\hat{\mathbf{f}}_{i_n}(\boldsymbol{\lambda};\mathbf{Z}^{\nu}))_{i=1}^{K_n}$, is the unique minimizer of the Lagrangian function associated with \ref{Lip_problem}, which after reorganizing terms can be written as\vspace{-0.1cm}
\beq \vspace{-0.2cm}
\begin{array}{llll}
\hspace{-0.21cm}\hat{\mathbf{Z}}(\boldsymbol{\lambda};\bZ^\nu)\!\!\triangleq\hspace{-0.68cm}
\underset{\quad\quad\mathbf{Q} \in \mathcal{Q}, \mathbf{f} \in \mathbb{R}^{|\mathcal{I}|}_{+}}{\text{argmin}}  &\!\!\!\!\! \ds \sum_{n=1}^{N_c}\left(\mathcal{L}_{\mathbf{Q}_n}(\mathbf{Q}_n, \boldsymbol{\lambda}; \mathbf{Q}^{\nu} )\!+
    \mathcal{L}_{\mathbf{f}_n}(\mathbf{f}_n, \boldsymbol{\lambda};\mathbf{f}^{\nu}_n )\right)\!,
 \end{array} \label{dual_distr}
\eeq where $\mathbf{Q}_{n}\triangleq (\mathbf{Q}_{i_n})_{i=1}^{K_n}$, $\mathbf{f}_n\triangleq (f_{i_n})_{i=1}^{K_n}$, and

\beq \label{dual_Lagr}
\begin{split}
& \begin{split}
 & \mathcal{L}_{\mathbf{Q}_{n}}(\mathbf{Q}_{n}, \boldsymbol{\lambda};\mathbf{Q}^{\nu} )=\\ &\qquad \quad\ds \sum_{i=1}^{K_n} \left\{\tilde{E}_{i_n}(\mathbf{Q}_{i_n}, f_{i_n}^{\nu};\mathbf{Z}^{\nu}) +  \ds  \sum_{j_l \in \mathcal{I}}\lambda_{j_l} \tilde{q}_{i_n, j_l}(\mathbf{Q}_{i_n}; \mathbf{Q}^{\nu})\right\}\!,
\end{split} \\
& \mathcal{L}_{\mathbf{f}_{n}}(\mathbf{f}_{n}, \boldsymbol{\lambda};\mathbf{f}_{n}^{\nu})\!= \!\!\ds \sum_{i=1}^{K_n}\!\left\{\!\frac{c_f}{2} (f_{i_n}-f^{\nu}_{i_n})^2\!+\! \dfrac{\lambda_{i_n} \cdot c_{i_n}\cdot f_{i_n}}{f_{i_n}\cdot \tilde{T}_{i_n}-\omega_{i_n}}\!+ \!\! \lambda_f  f_{i_n}\!\!\right\}\!.\bigskip\\
\end{split}
\eeq

 \begin{algorithm}[t]
\textbf{Initial data:} $\boldsymbol{\lambda}^{0}\geq \mathbf{0}$, $\mathbf{Z}^{\nu}=(\mathbf{Q}^{\nu},\mathbf{f}^{\nu})$, $\{\beta_{k}\}>0$. Set $k=0$,

(\texttt{S.1}): If $\boldsymbol{\lambda}^{k}$  satisfies a suitable termination criterion:\texttt{STOP};

(\texttt{S.2}):  For each SCeNB  $n$, compute in parallel ${\mathbf{Q}}^{k+1}_n(\boldsymbol{\lambda}^{k};\mathbf{z}^{\nu})$
and ${\mathbf{f}}^{k+1}_n(\boldsymbol{\lambda}^{k};\mathbf{z}^{\nu})$ [cf.  (\ref{dual_distr_nuf})];

(\texttt{S.3}):  Update at the master node $\boldsymbol{\lambda}^{k+1}$ according to\medskip

 $ $  ${\lambda}^{k+1}_{i_n}\triangleq \left[{\lambda}^{k}_{i_n} +\beta_{k} \left(\ds\sum_{j_{l}\in\mathcal{I}}\tilde{q}_{j_{l},i_{n}}(\mathbf{Q}_{j_{l}};\mathbf{Q}^{\nu})+\bar{q}_{i_n}(f_{i_n};\mathbf{Q}^{\nu})\right) \right]^{+}$,

 \hspace{7.4cm}$\forall i_n \in \mathcal{I}\medskip$

  $${\lambda}^{k+1}_{f}\triangleq \left[{\lambda}^{k}_{f} +\beta_{k} \left( \ds \sum_{i_n \in \mathcal{I}} f^{k+1}_{i_n}-f_T \right)\right]^{+}$$

(\texttt{S.4}): $k\leftarrow k+1$ and go back to (\texttt{S.1}).

\caption{\textbf{:}   Distributed implementation of  S.2   in Alg. \ref{alg:Alg_centr}. \label{alg:Alg_dual_distr}}

\end{algorithm}

Note that, thanks to the separability structure of the Lagrangian function,    the optimal solutions  $\hat{\mathbf{Z}}_{n}(\boldsymbol{\lambda};\mathbf{Z}^{\nu})=(\hat{\mathbf{Q}}_n(\boldsymbol{\lambda};\mathbf{Q}^{\nu}),\hat{\mathbf{f}}_n(\boldsymbol{\lambda};\mathbf{f}^{\nu}))$  of (\ref{dual_distr}) can be   computed in parallel across the  SCeNBs, solving each  SCeNBs $n$ the following strongly convex problems:  given $\boldsymbol{\lambda}\geq \mathbf{0}$,
  \beq
\begin{array}{llll}
\hat{\mathbf{Q}}_n(\boldsymbol{\lambda};\mathbf{Q}^{\nu})&\triangleq &
\underset{\mathbf{Q}_{n} \in \, \Pi_{i=1}^{K_n} \mathcal{Q}_{i_n}}{\mbox{argmin}}
\left\{  \mathcal{L}_{\mathbf{Q}_{n}}(\mathbf{Q}_{n}, \boldsymbol{\lambda};\mathbf{Q}^{\nu} ) \right\}\medskip\\
  \hat{\mathbf{f}}_n(\boldsymbol{\lambda};\mathbf{f}^{\nu})&\triangleq &
\underset{ \mathbf{f}_n \in \mathbb{R}^{K_n}_{+}}{\mbox{argmin}}
\left\{  \mathcal{L}_{\mathbf{f}_{n}}(\mathbf{f}_{n}, \boldsymbol{\lambda};\mathbf{f}_{n}^{\nu} ) \right\}.
 \end{array} \label{dual_distr_nuf}
\eeq

 The solution of \ref{Lip_problem}  can be then computed solving the dual problem (\ref{distr_dual}). It is not difficult to prove that the   dual function $D$ is  differentiable with Lipschitz gradient. One can then solve (\ref{distr_dual})  using, e.g.,
  the gradient-based algorithm with diminishing step-size
  described  in Algorithm \ref{alg:Alg_dual_distr}, whose convergence is stated in  Theorem \ref{thm:dual_decomp} (the proof follows standard arguments and thus is omitted, because of space limitations).\vspace{-0.1cm}

\begin{theorem} \label{thm:dual_decomp} {\it
Given   \ref{Lip_problem},
choose  $\{\beta_k\}$ so that $\beta_k> 0$, $\beta_{k}\rightarrow 0$, $\sum_{k}\beta_{k}=+\infty$, and $\sum_{k}(\beta_{k})^2<\infty$.
Then,
the sequence $\{\boldsymbol{\lambda}_k\}$  generated by Algorithm \ref{alg:Alg_dual_distr} converges to a solution of
 (\ref{distr_dual}). Therefore,   the sequence
 $\{\hat{\mathbf{Z}}^k(\boldsymbol{\lambda}_k;\mathbf{Z}^{\nu})\}_k$
 converges to the unique solution of \ref{Lip_problem}. } \hfill $\square$
\end{theorem}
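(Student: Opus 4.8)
The plan is to recognize Algorithm \ref{alg:Alg_dual_distr} as a projected gradient ascent applied to the concave dual function in (\ref{distr_dual}), and to invoke a standard convergence result for gradient methods with diminishing, square-summable step sizes. First I would record the structural properties of \ref{Lip_problem} that make the dual approach well posed. By construction the objective $\sum_{i_n}\tilde{E}_{i_n}$ is strongly convex with modulus $c_{\tilde{E}}>0$, the dualized constraints $\tilde q_{i_n}(\cdot)\le 0$ and $\sum_{i_n}f_{i_n}\le f_T$ are convex (conditions C.1--C.6), and a Slater point is available under the feasibility conditions (\ref{feasibility_multicell}). Hence strong duality holds with zero duality gap (as already noted before the theorem), the inner minimizer $\hat{\mathbf{Z}}(\boldsymbol{\lambda};\mathbf{Z}^\nu)$ in (\ref{dual_distr}) is unique for every $\boldsymbol{\lambda}\ge\mathbf{0}$, and (Slater again) the dual optimal set is nonempty and bounded.

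Next I would establish the smoothness of the dual. Because the Lagrangian is strongly convex in $(\mathbf{Q},\mathbf{f})$ uniformly in $\boldsymbol{\lambda}$, the minimizer map $\boldsymbol{\lambda}\mapsto\hat{\mathbf{Z}}(\boldsymbol{\lambda};\mathbf{Z}^\nu)$ is single-valued and Lipschitz; combined with Danskin's theorem this yields that $D$ is continuously differentiable with Lipschitz gradient, whose components are exactly the constraint residuals $\sum_{j_l}\tilde q_{j_l,i_n}(\hat{\mathbf{Q}}_{j_l};\mathbf{Q}^\nu)+\bar q_{i_n}(\hat f_{i_n};\mathbf{Q}^\nu)$ and $\sum_{i_n}\hat f_{i_n}-f_T$ evaluated at the minimizer. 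These are precisely the ascent directions used in Step (S.3) of Algorithm \ref{alg:Alg_dual_distr}, so the $\boldsymbol{\lambda}$-update is projected gradient ascent onto the nonnegative orthant.

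The final step is purely the convergence argument. With a concave objective possessing a bounded Lipschitz gradient along the iterates and a nonempty bounded optimal set, the conditions $\beta_k>0$, $\beta_k\to 0$, $\sum_k\beta_k=+\infty$, $\sum_k\beta_k^2<\infty$ are exactly those under which the projected (sub)gradient method converges to an optimal solution $\boldsymbol{\lambda}^\star$ of (\ref{distr_dual}); I would invoke the standard theorem on diminishing square-summable step sizes. Convergence of the primal then follows from continuity of the minimizer map: $\hat{\mathbf{Z}}(\boldsymbol{\lambda}_k;\mathbf{Z}^\nu)\to\hat{\mathbf{Z}}(\boldsymbol{\lambda}^\star;\mathbf{Z}^\nu)$, and by strong duality together with the KKT/complementary-slackness conditions at $(\hat{\mathbf{Z}}(\boldsymbol{\lambda}^\star;\mathbf{Z}^\nu),\boldsymbol{\lambda}^\star)$, the limit coincides with the unique optimal solution $\widehat{\mathbf{Z}}(\mathbf{Z}^\nu)$ of \ref{Lip_problem}.

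Since the routine checks (convexity and Slater) are immediate, the main obstacle is the smoothness-and-boundedness step. The covariance block lives in the compact set $\mathcal{Q}$, but the computational-rate block $\mathbf{f}$ is minimized over the \emph{unbounded} nonnegative orthant in (\ref{dual_distr}), so compactness alone does not bound $\nabla D$. The resolution is to exploit the quadratic regularization $\tfrac{c_{f_{i_n}}}{2}(f_{i_n}-f_{i_n}^\nu)^2$ inside $\tilde E_{i_n}$: it makes $\mathcal{L}_{\mathbf{f}_n}$ coercive and strongly convex, which both guarantees a unique finite $\hat{\mathbf{f}}(\boldsymbol{\lambda})$ and makes $\boldsymbol{\lambda}\mapsto\hat{\mathbf{f}}(\boldsymbol{\lambda})$ Lipschitz, so that $\nabla D$ is Lipschitz and the residuals stay bounded along the (bounded) iterates $\{\boldsymbol{\lambda}_k\}$. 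Verifying these two facts --- Lipschitz continuity of the full minimizer map and uniform boundedness of $\nabla D$ on the trajectory --- is what underlies the otherwise standard convergence argument and is the only place where the specific structure of \ref{Lip_problem} genuinely enters.
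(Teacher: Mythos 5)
Your proposal is correct and follows exactly the route the paper intends: the authors omit the proof as ``standard,'' having already asserted in the preceding text that $D$ is differentiable with Lipschitz gradient, and your argument — strong convexity of the Lagrangian plus Danskin to get a smooth concave dual whose gradient components are the constraint residuals at the unique inner minimizer, then the classical diminishing square-summable step-size convergence theorem for projected gradient ascent, then continuity of the minimizer map and strong duality for primal recovery — is precisely that standard argument made explicit. Your identification of the unbounded $\mathbf{f}$-block as the only place where Lipschitz continuity and boundedness of $\nabla D$ could fail, resolved by the coercivity and strong convexity supplied by the regularizer $\tfrac{c_{f_{i_n}}}{2}(f_{i_n}-f_{i_n}^{\nu})^2$, is the one genuinely non-routine point, and you handle it correctly.
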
\vspace{-0.3cm}

 \subsection{Alternative decomposition via slack variables}

 In this section we present an alternative decomposition strategy of problem $\mathcal P$ that does not require the knowledge of the Lipschitz constants $L_{{j_l}, i_n}$. At the basis of our approach there is an equivalent reformulation of $\mathcal P$ based on the introduction of proper slack variables that are instrumental  to decouple in each $r_{i_n}^+(\bQ)$ [cf. \eqref{r_lus}] the covariance matrix $\bQ_{i_n}$ of user $i_n$ from those of the  MUs in the other cells$-$the interference term $\mathbf{R}_{n}(\bQ_{-n})$ [cf. (\ref{eq:MUI})]. More specifically, introducing the slack variables $\bY_{i_n}$,  and
\beq
\mathbf{I}_{i_n}(\mathbf{Q})  \triangleq \sum_{j_m\in \mathcal I, m\neq n}\mathbf{H}_{j_m n} \mathbf{Q}_{j_m} \mathbf{H}_{j_m n}^H+ \mathbf{H}_{i_n n} \mathbf{Q}_{i_n} \mathbf{H}_{i_n n}^H,
\eeq
we can write  \vspace{-0.2cm}
\beq\label{r_plus_tilde}
r_{i_n}^+(\bQ)=\overline{r}_{i_n}^{\,+}(\bY),\vspace{-0.2cm}
\eeq
with
\beq \label{Y_slack}
\overline{r}_{i_n}^{\,+}(\bY) \triangleq \log_2 \det \left(\mathbf{R}_{w}+\mathbf{Y}_{i_n}\right)\,\, \text{and}\,\, \mathbf{Y}_{i_n}=\mathbf{I}_{i_n}(\mathbf{Q}).
\eeq
 Using \eqref{r_plus_tilde}, \eqref{Y_slack}, and $g_{i_{n}}(\mathbf{Q},f_{i_{n}})$ written as in   \eqref{g_DC},
the original offloading problem  $\mathcal P$ can be rewritten in the  following equivalent form: denoting $\bY\triangleq  (\bY_{i_n})_{i_n\in \mathcal I}$,
 \beq\label{P_tilde}
\hspace{-0.3cm}\begin{array}{l}
\underset{\mathbf{Q},\mathbf{f},\mathbf{Y}}{\min}\,\, {E}(\mathbf{Q})\\
\begin{array}{ll}
\mbox{\!\! s.t.} & \texttt{a)}\,-\overline{r}_{i_n}^{\,+}(\mathbf{Y}_{i_{n}})-r_{n}^{\,\text{-}}(\mathbf{Q}_{-n})+\frac{c_{i_{n}}\cdot f_{i_{n}}}{f_{i_{n}}\cdot\tilde{T}_{i_{n}}-w_{i_{n}}}\leq0,\,\forall{i_{n}\in\mathcal{I}},\medskip\\
 & \texttt{b)}\,{\displaystyle {\sum_{i_{n}\in\mathcal{I}}}f_{i_{n}}\leq f_{T},\quad f_{i_{n}}\geq0,\,\,\,\,\, \forall{i_{n}\in\mathcal{I}},\medskip}\\
 & \texttt{c)}\,\mathbf{Q}_{i_{n}}\in\mathcal{Q}_{i_{n}},\hspace{2.3cm}\forall{i_{n}\in\mathcal{I}},\medskip\\
 & \texttt{d)}\,\mathbf{0}\preceq \mathbf{Y}_{i_{n}}\preceq\mathbf{I}_{i_{n}}(\mathbf{Q}),\hspace{0.6cm}\forall{i_{n}\in\mathcal{I}}.
\end{array}\vspace{-0.2cm}
\end{array}\tag{$\tilde{\mathcal P}$}
\eeq
 We denote by $\tilde{\mathcal X}$ the feasible set of  $\tilde{\mathcal P}$. The equivalence between ${\mathcal P}$ and $\tilde{\mathcal P}$ is stated next.
 \begin{lemma} Given the nonconvex problems ${\mathcal P}$ and $\tilde{\mathcal P}$, the following hold:

 \noindent (a): Every  {feasible} point of $\tilde{\mathcal P}$ (or ${\mathcal P}$) is regular (i.e., satisfies the Mangasarian-Fromovits Constraint Qualification \cite{Facchinei-Pang_FVI03});

  \noindent (b): ${\mathcal P}$ and $\tilde{\mathcal P}$ are equivalent in the following sense. If $(\bar{\bQ},\bar{\mathbf{f}})$ is a stationary solution of ${\mathcal P}$, then there exists a  $\bar{\bY}$ such that  $(\bar{\bQ},\bar{\mathbf{f}},\bar{\bY})$ is a stationary solution of $\tilde{\mathcal P}$; and   viceversa. \hfill $\square$
  \end{lemma}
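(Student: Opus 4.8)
The plan is to prove (a) and (b) separately, using the regularity in (a) to legitimize the KKT characterization of ``stationarity'' that underlies (b). Throughout I would record two structural facts that hold at \emph{every} feasible point of $\mathcal P$ or $\tilde{\mathcal P}$: the execution-time term forces $f_{i_n}>w_{i_n}/\tilde{T}_{i_n}>0$, so the constraints $f_{i_n}\ge 0$ are never active, and every rate satisfies $r_{i_n}(\mathbf{Q})>0$.

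For part (a), I would observe that the computational-budget constraint b), the power constraints c), and the new coupling d) are all \emph{affine} in $(\mathbf{Q},\mathbf{f},\mathbf{Y})$, so that checking MFCQ reduces to exhibiting a single direction in the tangent cone of these convex sets that strictly decreases all active latency constraints a). The construction exploits that $\partial g_{i_n}/\partial f_{i_n}<0$ and, in $\tilde{\mathcal P}$, that $\nabla_{\mathbf{Y}_{i_n}} g_{i_n}=-(\ln 2)^{-1}(\mathbf{R}_w+\mathbf{Y}_{i_n})^{-1}\prec\mathbf{0}$: whenever the budget is slack one simply increases each $f_{i_n}$ with an active constraint, and whenever the upper bound in d) is slack one increases $\mathbf{Y}_{i_n}$. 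A short analysis of the $\mathbf{f}$-component of the MFCQ relation shows that the only configuration not covered this way is the one in which \emph{all} latency constraints and the budget b) are simultaneously active; there I would build the direction from the covariance matrices, invoking strict monotonicity of the rates in the users' own powers. I expect this degenerate regime to be the main obstacle of the whole lemma.

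For part (b) the bridge is the monotonicity of $\overline{r}_{i_n}^{+}(\mathbf{Y}_{i_n})=\log_2\det(\mathbf{R}_w+\mathbf{Y}_{i_n})$ in the positive-semidefinite order, together with the identity $\overline{r}_{i_n}^{+}(\mathbf{I}_{i_n}(\mathbf{Q}))=r_{i_n}^{+}(\mathbf{Q})$. For the forward implication I would take a stationary $(\bar{\mathbf{Q}},\bar{\mathbf{f}})$ of $\mathcal P$ with latency multipliers $\lambda_{i_n}$, set $\bar{\mathbf{Y}}_{i_n}=\mathbf{I}_{i_n}(\bar{\mathbf{Q}})$, and assign to the two matrix inequalities in d) the multipliers $\boldsymbol{\Psi}_{i_n}=(\ln 2)^{-1}\lambda_{i_n}(\mathbf{R}_w+\mathbf{I}_{i_n}(\bar{\mathbf{Q}}))^{-1}$ and $\boldsymbol{\Omega}_{i_n}=\mathbf{0}$, keeping all other multipliers unchanged. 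Then a) of $\tilde{\mathcal P}$ collapses to $g_{i_n}(\bar{\mathbf{Q}},\bar{f}_{i_n})\le 0$, d) holds with equality, the $\mathbf{Y}$-stationarity and complementary slackness hold by construction, and a direct computation shows $\langle\boldsymbol{\Psi}_{i_n},\nabla_{\mathbf{Q}_{j_l}}\mathbf{I}_{i_n}\rangle=\lambda_{i_n}\nabla_{\mathbf{Q}_{j_l}}r_{i_n}^{+}$, exactly the term that $\tilde{\mathcal P}$ ``lost'' by replacing $r_{i_n}^{+}(\mathbf{Q})$ with $\overline{r}_{i_n}^{+}(\mathbf{Y}_{i_n})$; hence the $\mathbf{Q}$- and $\mathbf{f}$-stationarity of $\tilde{\mathcal P}$ coincide with those of $\mathcal P$.

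For the converse I would start from a stationary point $(\bar{\mathbf{Q}},\bar{\mathbf{f}},\bar{\mathbf{Y}})$ of $\tilde{\mathcal P}$ and read off $\mathbf{Y}_{i_n}$-stationarity, $\boldsymbol{\Psi}_{i_n}=(\ln 2)^{-1}\lambda_{i_n}(\mathbf{R}_w+\bar{\mathbf{Y}}_{i_n})^{-1}+\boldsymbol{\Omega}_{i_n}$. When $\lambda_{i_n}>0$ this gives $\boldsymbol{\Psi}_{i_n}\succ\mathbf{0}$, so complementary slackness of the upper bound in d) forces $\bar{\mathbf{Y}}_{i_n}=\mathbf{I}_{i_n}(\bar{\mathbf{Q}})$, a) reduces to $g_{i_n}=0$, and reversing the forward computation recovers $\mathcal P$'s stationarity with the same $\lambda_{i_n}$. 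When $\lambda_{i_n}=0$, monotonicity still yields feasibility $g_{i_n}(\bar{\mathbf{Q}},\bar{f}_{i_n})\le 0$ for $\mathcal P$, but a possibly nonzero $\boldsymbol{\Psi}_{i_n}=\boldsymbol{\Omega}_{i_n}$ (supported on $\ker\mathbf{I}_{i_n}(\bar{\mathbf{Q}})$, by the two complementary-slackness relations) produces a residual $-\mathbf{H}_{j_l n}^{H}\boldsymbol{\Psi}_{i_n}\mathbf{H}_{j_l n}$ in the $\mathbf{Q}_{j_l}$-stationarity. I would show this residual is negative semidefinite with range contained in $\ker\bar{\mathbf{Q}}_{j_l}$ (using $\boldsymbol{\Psi}_{i_n}\mathbf{H}_{j_l n}\bar{\mathbf{Q}}_{j_l}=\mathbf{0}$), hence lies in the normal cone of $\mathcal Q_{j_l}$ at $\bar{\mathbf{Q}}_{j_l}$ and can be absorbed there, leaving $\mathcal P$'s KKT system satisfied. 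Cleanly absorbing these $\lambda_{i_n}=0$ residuals is the delicate point of part (b).
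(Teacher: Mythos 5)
The paper itself does not include a proof of this lemma in the main text (it is deferred to supplementary material not reproduced here), so a line-by-line comparison is impossible; I can only assess your plan on its merits. Part (b) of your plan is essentially right and contains the correct key mechanisms: the identification $\bar{\bY}_{i_n}=\mathbf{I}_{i_n}(\bar{\bQ})$, the multiplier choice $\boldsymbol{\Psi}_{i_n}=(\ln 2)^{-1}\lambda_{i_n}(\mathbf{R}_w+\mathbf{I}_{i_n}(\bar{\bQ}))^{-1}$ that exactly reproduces $\lambda_{i_n}\nabla_{\bQ_{j_l}}r^{+}_{i_n}$ through the chain rule on constraint d), the forcing of $\bar{\bY}_{i_n}=\mathbf{I}_{i_n}(\bar{\bQ})$ from $\boldsymbol{\Psi}_{i_n}\succ\mathbf{0}$ when $\lambda_{i_n}>0$, and the absorption of the $\boldsymbol{\Omega}_{i_n}$-residual into the normal cone of $\mathcal{Q}_{j_l}$ via the orthogonality $\langle\boldsymbol{\Omega}_{i_n},\mathbf{H}_{j_l n}\bar{\bQ}_{j_l}\mathbf{H}_{j_l n}^{H}\rangle=0$ extracted from complementary slackness. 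One small omission: the same residual $-\mathbf{H}_{j_l n}^{H}\boldsymbol{\Omega}_{i_n}\mathbf{H}_{j_l n}$ also appears in the converse when $\lambda_{i_n}>0$ and $\bar{\bY}_{i_n}$ is singular, so the absorption argument must be applied uniformly, not only in the $\lambda_{i_n}=0$ branch; the needed orthogonality follows in both branches from $\langle\boldsymbol{\Omega}_{i_n},\bar{\bY}_{i_n}\rangle=0$ together with $\bar{\bY}_{i_n}=\mathbf{I}_{i_n}(\bar{\bQ})$ or $\langle\boldsymbol{\Psi}_{i_n},\mathbf{I}_{i_n}(\bar{\bQ})-\bar{\bY}_{i_n}\rangle=0$.

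The genuine gap is in part (a), precisely in the "degenerate regime" you flag but do not resolve. When every latency constraint and the computational budget $\sum_{i_n}f_{i_n}\leq f_T$ are simultaneously active, the MFCQ direction must have $\sum_{i_n}\delta f_{i_n}<0$, so at least one user's latency constraint must be strictly decreased through a strict increase of its rate. Your proposed fix, "strict monotonicity of the rates in the users' own powers," collides with the transmit-power constraints $\mathrm{tr}(\bQ_{i_n})\leq P_{i_n}$: if those are also active (which is exactly what one expects at such a point), the required direction must satisfy $\mathrm{tr}(\delta\bQ_{i_n})<0$, and increasing the own covariance is infeasible. This is not a presentational gap: in the single-user specialization at the boundary of feasibility (condition \eqref{eq:feasibility_su} with equality, i.e., $\bQ=\bQ^{\mathrm{wf}}$ at full power, $f=f_T$, latency tight), no MFCQ direction exists, because $r(\bQ)$ is maximized over the power ball at $\bQ^{\mathrm{wf}}$ and hence cannot be strictly increased in any feasible direction. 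The paper excludes that configuration by assuming strict feasibility; any correct proof of (a) must therefore invoke the standing feasibility assumptions (e.g., \eqref{feasibility_multicell}) to rule out or handle the fully-active configuration, and your proposal never does so. A secondary, fixable issue: you treat the semidefinite constraints in c) and d) as "affine" scalar inequalities, whereas MFCQ must be stated for the conic/abstract-set form (a direction in the interior of the tangent cone of $\mathcal{Q}_{i_n}$ and of the order interval $\mathbf{0}\preceq\bY_{i_n}\preceq\mathbf{I}_{i_n}(\bQ)$), and when the upper bound in d) is active the $\bY_{i_n}$- and $\bQ$-components of the direction are coupled, which your construction does not address.
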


 Condition (a) in the lemma guarantees the existence of stationary points of $\tilde{\mathcal P}$, whereas (b) allows us to compute (stationary) solutions of $\mathcal P$ solving \ref{P_tilde}.

  We convexify next \ref{P_tilde} following the same guidelines as   in Sec. \ref{section:offloading_multiple_cells} [see P.1-P.3 and C.1-C.6]. Introducing \vspace{-0.2cm}
\begin{equation}\label{g_tildey}
\begin{array}{lll}
\tilde{g}_{i_n}(\mathbf{Q}, f_{i_n}, \mathbf{Y}_{i_n};\mathbf{Q}^\nu)\triangleq
- \overline{r}_{i_n}^{\,+}(\mathbf{Y}_{i_n}) + \dfrac{c_{i_n}\cdot f_{i_n}}{f_{i_n}\cdot \tilde{T}_{i_n}-w_{i_n}} \medskip\\
- r_{n}^{\,\text{-}}(\mathbf{Q}_{-n}^\nu) -  \displaystyle{\sum_{j_m \in \mathcal{I}}} \left\langle \boldsymbol{\Pi}^{\,\text{-}}_{j_m,n}(\mathbf{Q}^{\nu}), \mathbf{Q}_{j_m}-\mathbf{Q}_{j_m}^{\nu} \right\rangle,
\end{array}
\end{equation}
 and using the same approximant $\tilde{E}(\bZ;\bZ^\nu)$ as defined in (\ref{E_tilde_in}), we have:  given a feasible $\mathbf{W}^\nu\triangleq (\textbf{Z}^\nu, \bY^\nu)$,
\begin{equation}
 \begin{array}{ll}
\hat{\mathbf{W}}(\mathbf{W}^{\nu})\triangleq &  \!\!\!\!\underset{\mathbf{Q},\mathbf{f},\mathbf{Y}}{{\text{argmin}}}\!\!\quad\tilde{E}(\mathbf{Z};\mathbf{Z}^{\nu})+\dfrac{c_{\mathbf{Y}}}{2}\|\mathbf{Y}-\mathbf{Y}^{\nu}\|^{2}\\
& \begin{array}{ll}
\!\!\mbox{\!\! s.t.} & \texttt{a)}\,\tilde{g}_{i_{n}}(\mathbf{Q},,f_{i_{n}},\mathbf{Y}_{i_{n}};\mathbf{Q}^{\nu})\leq 0,\,\,\forall{i_{n}\in\mathcal{I}},\medskip\\
 & \texttt{b)}\,{\displaystyle {\sum_{i_{n}\in\mathcal{I}}}f_{i_{n}}\leq f_{T},\quad f_{i_{n}}\geq0,\quad\forall{i_{n}\in\mathcal{I}},\medskip}\\
 & \texttt{c)}\,\mathbf{Q}_{i_{n}}\in\mathcal{Q}_{i_{n}},\quad\forall{i_{n}\in\mathcal{I}},\medskip\\
 & \texttt{d)}\,\mathbf{0}\preceq\mathbf{Y}_{i_{n}}\preceq\mathbf{I}_{i_{n}}(\mathbf{Q}),\forall{i_{n}\in\mathcal{I}}
\end{array}
\end{array}
 \label{prob_cvx_y}\tag{$\widetilde{{P}}^{\nu}$}
\end{equation}
where $\hat{\mathbf{W}}(\mathbf{W}^{\nu})=(\hat{\mathbf{Q}}(\mathbf{W}^{\nu}),\hat{\mathbf{f}}(\mathbf{W}^{\nu}),\hat{\mathbf{Y}}(\mathbf{W}^{\nu}))$ denotes  the unique solution of \ref{prob_cvx_y}, and $c_{\mathbf{Y}}$ is an arbitrary positive constant.\\
\indent The stationary solutions of $\tilde{\mathcal P}$ (and thus ${\mathcal P}$) can be computed solving the sequence of strongly convex problems \ref{prob_cvx_y}. The formal description of the scheme is still given by  Algorithm 2 wherein in Step 2,  $\hat{\mathbf{Z}}(\mathbf{Z}^{\nu})$ is replaced by  $\hat{\mathbf{W}}(\mathbf{W}^{\nu})$; convergence is guaranteed under conditions in Theorem 2.\\
\indent The last thing left is   showing how to solve each subproblem  \ref{prob_cvx_y} in a distributed way.
Problem \ref{prob_cvx_y} can be decoupled across the SCeNB's in the dual domain (note that there is zero duality gap). Indeed, denoting by   $\mathbf{W} \triangleq (\mathbf{Q}, \mathbf{f}, \mathbf{Y})$,  and $\boldsymbol{\lambda}\triangleq((\lambda_{i_n})_{i_n \in \mathcal{I}}, \lambda_f)$ and $\boldsymbol{\Omega}\triangleq (\boldsymbol{\Omega}_{i_n}\succeq
\mathbf{0})_{i_n \in \mathcal{I}}$   the multipliers associated with the constraints (a), (b), and (d), respectively, the (partial) Lagrangian  has the following \emph{additive} structure:
\beq\nonumber
\hspace{-0.2cm}\begin{array}{llll}
\begin{split}
\mathcal{L}(\mathbf{W}, \boldsymbol{\lambda}, \boldsymbol{\Omega}; \mathbf{Q}^{\nu})& \triangleq
\ds \sum_{n=1}^{N_c}\left\{\mathcal{L}_{\mathbf{Q}_n}(\mathbf{Q}_n,\boldsymbol{\lambda}, \boldsymbol{\Omega}; \mathbf{Y}^{\nu})+ \right.\\  & \left. \mathcal{L}_{\mathbf{Y}_n}(\mathbf{Y}_n, \boldsymbol{\lambda}, \boldsymbol{\Omega}; \mathbf{W}^{\nu})+
\mathcal{L}_{\mathbf{f}_{n}}(\mathbf{f}_{n}, \boldsymbol{\lambda},\mathbf{f}_{n}^{\nu} )\right\},\end{split}
\end{array}
\eeq
 where\vspace{-0.2cm}
\beq\nonumber
\begin{array}{llllll}\hspace{-0.22cm}
\begin{split}
\mathcal{L}_{\mathbf{Q}_n}(\mathbf{Q}_n,\boldsymbol{\lambda}, \boldsymbol{\Omega}; \mathbf{W}^{\nu})\!=& \!\ds \sum_{i=1}^{K_n} \left\{\tilde{E}_{i_n}(\mathbf{Q}_{i_n}, f_{i_n}^{\nu};\mathbf{Z}^{\nu}) - \! \lambda_{i_n} r_{n}^{\,\text{-}}(\mathbf{Q}_{-n}^\nu) \right. \\
 & -\!\! \displaystyle{\sum_{j_m \in \mathcal{I}}} \lambda_{j_m}  \left\langle \boldsymbol{\Pi}^{\,\text{-}}_{i_n,j_m}(\mathbf{Q}^{\nu}), \mathbf{Q}_{i_n}-\mathbf{Q}_{i_n}^{\nu} \right\rangle\\
& - \!\!\sum_{j_{m} \in \mathcal{I}, m\neq n}
 \left\langle \boldsymbol{\Omega}_{j_m}, \mathbf{H}_{i_n m}\mathbf{Q}_{i_n} \mathbf{H}_{i_n m}^{H} \right\rangle \\
 &\left. - \left\langle \boldsymbol{\Omega}_{i_n}, \mathbf{H}_{i_n n}\mathbf{Q}_{i_n} \mathbf{H}_{i_n n}^{H} \right\rangle \right\},
  \end{split}
  \\
  \begin{split} \mathcal{L}_{\mathbf{Y}_n}(\mathbf{Y}_n, \boldsymbol{\lambda}, \boldsymbol{\Omega}; \mathbf{W}^{\nu})= & \ds \sum_{i=1}^{K_n} \left\{- \lambda_{i_n}
 r_{i_n}^{\,\text{+}}(\mathbf{Y}_{i_n})+\left\langle \boldsymbol{\Omega}_{i_n}, \mathbf{Y}_{i_n} \right\rangle
 \right.\\
 & \left.+ \ds \frac{c_{\mathbf{Y}}}{2} \| \mathbf{Y}_{i_n}- \mathbf{Y}_{i_n}^{\nu}\|^2 \right\},\\
 \end{split}\\
\end{array}
\eeq
and $\mathcal{L}_{\mathbf{f}_{n}}(\mathbf{f}_{n}, \boldsymbol{\lambda},\mathbf{f}_{n}^{\nu} )$ is given by  (\ref{dual_Lagr}).
The minimization of $\mathcal{L}(\mathbf{W}, \boldsymbol{\lambda}, \boldsymbol{\Omega}; \mathbf{W}^{\nu})$ w.r.t. $\mathbf{W}=(\mathbf{Q}, \mathbf{f}, \mathbf{Y})\triangleq ( \mathbf{Q}_n, \mathbf{f}_n, \mathbf{Y}_n)_{n=1}^{N_c}$ becomes then
\beq
\hspace{-0.6cm}\begin{array}{llll}
& D(\boldsymbol{\lambda}, \boldsymbol{\Omega}; \mathbf{W}^{\nu})\triangleq  \ds \sum_{n=1}^{N_c} \left( \underset{\mathbf{Q}_n \in \mathcal{Q}}\min \mathcal{L}_{\mathbf{Q}_n}(\mathbf{Q}_n,\boldsymbol{\lambda}, \boldsymbol{\Omega}; \mathbf{W}^{\nu})\right.\\ & \hspace{0.6cm} \left.
+\!\!\! \underset{(\mathbf{Y}_{i_n}\succeq \mathbf{0})_{i_n \in \mathcal{I}}}\min  \!\!\!\!\mathcal{L}_{\mathbf{Y}_n}(\mathbf{Y}_n, \boldsymbol{\lambda}, \boldsymbol{\Omega}; \mathbf{W}^{\nu})+ \underset{ \mathbf{f} \in \mathbb{R}^{|\mathcal{I}|}_{+}}\min
\mathcal{L}_{\mathbf{f}_{n}}(\mathbf{f}_{n}, \boldsymbol{\lambda},\mathbf{f}_{n}^{\nu})\right)
 \end{array} \label{dual_distr_y1}
\eeq
whose unique solutions $\hat{\mathbf{W}}(\boldsymbol{\lambda}, \boldsymbol{\Omega};\mathbf{W}^{\nu})\triangleq (\hat{\mathbf{Q}}_n(\boldsymbol{\lambda},\boldsymbol{\Omega};\mathbf{Q}^{\nu}), $ $\hat{\mathbf{Y}}_n(\boldsymbol{\lambda}, \boldsymbol{\Omega};\mathbf{Y}^{\nu}),\hat{\mathbf{f}}_n(\boldsymbol{\lambda};\mathbf{f}^{\nu}))_{n=1}^{N_c}$ can be computed in parallel across the SCeNBs $n$:

  \beq
\begin{array}{l}
\hspace{-1.8cm}\hat{\mathbf{Q}}_n(\boldsymbol{\lambda}, \boldsymbol{\Omega};\mathbf{Q}^{\nu})\triangleq
\underset{\mathbf{Q}_{n}\in \mathcal{Q}_{n}}{\mbox{argmin}}
\left\{  \mathcal{L}_{\mathbf{Q}_{n}}(\mathbf{Q}_{n}, \boldsymbol{\lambda}, \boldsymbol{\Omega};\mathbf{Q}^{\nu} ) \right\}
 \end{array} \label{dual_distr_nuQ_y}
\eeq
 \beq
\begin{array}{l}
\hspace{-0.6cm}\hat{\mathbf{Y}}_n(\boldsymbol{\lambda}, \boldsymbol{\Omega};\mathbf{Y}^{\nu})\triangleq
\underset{(\mathbf{Y}_{i_n}\succeq \mathbf{0})_{i=1}^{K_n}}{\mbox{argmin}}
\left\{  \mathcal{L}_{\mathbf{Y}_{n}}(\mathbf{Y}_{n}, \boldsymbol{\lambda}, \boldsymbol{\Omega};\mathbf{Y}^{\nu} ) \right\}
 \end{array} \label{dual_distr_nuY}
\eeq
\beq
\begin{array}{l}
\hspace{-3.2cm}\hat{\mathbf{f}}_n(\boldsymbol{\lambda};\mathbf{f}^{\nu})\triangleq
\underset{ \mathbf{f}_n \in \mathbb{R}^{K_n}_{+}}{\mbox{argmin}}
\left\{  \mathcal{L}_{\mathbf{f}_{n}}(\mathbf{f}_{n}, \boldsymbol{\lambda};\mathbf{f}_{n}^{\nu} ) \right\}.
 \end{array} \label{dual_distr_nuf_y}
\eeq

Interestingly, problem (\ref{dual_distr_nuY})  admits a closed form solution.
\begin{lemma}\label{lemma:closed_form}
Let $\mathbf{U}^{H}_{i_n} {\mathbf{D}}_{i_n} \mathbf{U}_{i_n}$ be the eigenvalue/eigenvector decomposition of $c_{\mathbf{Y}} \mathbf{Y}_{i_n}^{\nu}-\mathbf{\Omega}_{i_n}$,
with ${\mathbf{D}}_{i_n}=\text{diag}(({{d}}_{i_n,j})_{j=1}^{n_{R_n}})$. The optimal solution of problem (\ref{dual_distr_nuY})
is
\beq
\mathbf{Y}_{i_n}=\mathbf{U}_{i_n} \mathbf{D}_{\mathbf{Y}_{i_n}} \mathbf{U}^{H}_{i_n}
\eeq
with  $\mathbf{D}_{\mathbf{Y}_{i_n}} =\text{diag}(({y}_{i_n, j})_{j=1}^{n_{R_n}})$ given by
\beq
\begin{array}{lll}\nonumber
{y}_{i_n, j}=\left[-\left(\frac{\sigma^{2}_{w}}{2}-\frac{{{d}}_{i_n,j}}{2 c_{\mathbf{Y}}}\right)+\sqrt{\left(\frac{\sigma^{2}_{w}}{2}+\frac{{{d}}_{i_n,j}}{2 c_{\mathbf{Y}}}\right)^2+\frac{\lambda_{i_n}}{2 c_{\mathbf{Y}}}}\right]^{+}.
\end{array}
\eeq
\end{lemma}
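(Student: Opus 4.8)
The plan is to compute the minimizer of (\ref{dual_distr_nuY}) in closed form by first separating across users and then diagonalizing a single Hermitian matrix. First I would note that $\mathcal{L}_{\mathbf{Y}_n}$ is an additive sum over the $K_n$ users of cell $n$, and the constraint set $\{(\mathbf{Y}_{i_n}\succeq\mathbf{0})_{i=1}^{K_n}\}$ is itself separable; hence (\ref{dual_distr_nuY}) decomposes into $K_n$ independent problems, one for each $\mathbf{Y}_{i_n}$. Dropping the user index and writing $\overline{r}^{\,+}(\mathbf{Y})=\log_2\det(\mathbf{R}_w+\mathbf{Y})$, each subproblem reads
\begin{equation}
\underset{\mathbf{Y}\succeq\mathbf{0}}{\min}\;\; -\lambda\,\overline{r}^{\,+}(\mathbf{Y})+\left\langle\boldsymbol{\Omega},\mathbf{Y}\right\rangle+\dfrac{c_{\mathbf{Y}}}{2}\,\|\mathbf{Y}-\mathbf{Y}^{\nu}\|^{2},\qquad \lambda\triangleq\lambda_{i_n}\geq 0.\nonumber
\end{equation}
Since $-\overline{r}^{\,+}$ is convex, $\langle\boldsymbol{\Omega},\cdot\rangle$ is affine, and the quadratic term is strongly convex, the objective is strongly convex on the positive-semidefinite cone; as $\mathbf{Y}=\mathbf{I}$ is strictly feasible, Slater's condition holds and the (unique) minimizer is fully characterized by the KKT system. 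Introducing a Hermitian multiplier $\boldsymbol{\Psi}\succeq\mathbf{0}$ for the constraint $\mathbf{Y}\succeq\mathbf{0}$, stationarity can be written (up to the multiplicative constant fixed by the logarithm base, which I absorb into $\lambda$) as
\begin{equation}
c_{\mathbf{Y}}\,\mathbf{Y}-\lambda\,(\mathbf{R}_w+\mathbf{Y})^{-1}-\boldsymbol{\Psi}=c_{\mathbf{Y}}\,\mathbf{Y}^{\nu}-\boldsymbol{\Omega},\qquad \boldsymbol{\Psi}\,\mathbf{Y}=\mathbf{0}.\nonumber
\end{equation}

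The key step is to recognize that the right-hand side depends on $\boldsymbol{\Omega}$ and $\mathbf{Y}^{\nu}$ only through the Hermitian matrix $\mathbf{M}\triangleq c_{\mathbf{Y}}\mathbf{Y}^{\nu}-\boldsymbol{\Omega}=\mathbf{U}\,\mathbf{D}\,\mathbf{U}^{H}$, with $\mathbf{D}=\mathrm{diag}((d_j)_j)$, and to show that the minimizer shares the eigenvectors $\mathbf{U}$ of $\mathbf{M}$. The cleanest route is to posit the ansatz $\mathbf{Y}=\mathbf{U}\,\mathbf{D}_{\mathbf{Y}}\,\mathbf{U}^{H}$ and $\boldsymbol{\Psi}=\mathbf{U}\,\mathbf{D}_{\boldsymbol{\Psi}}\,\mathbf{U}^{H}$, with $\mathbf{D}_{\mathbf{Y}}=\mathrm{diag}((y_j)_j)\succeq\mathbf{0}$ and $\mathbf{D}_{\boldsymbol{\Psi}}\succeq\mathbf{0}$: substituting into the stationarity relation, every matrix on the left is simultaneously diagonalized by $\mathbf{U}$ (because $\mathbf{R}_w=\sigma_w^2\mathbf{I}$ gives $(\mathbf{R}_w+\mathbf{Y})^{-1}=\mathbf{U}(\sigma_w^2\mathbf{I}+\mathbf{D}_{\mathbf{Y}})^{-1}\mathbf{U}^{H}$), so the matrix KKT system collapses into $n_{R_n}$ decoupled scalar KKT conditions, one per eigen-coordinate $j$:
\begin{equation}
c_{\mathbf{Y}}\,y_j-\dfrac{\lambda}{\sigma_w^2+y_j}-\psi_j=d_j,\qquad y_j\geq 0,\quad \psi_j\geq 0,\quad \psi_j\,y_j=0.\nonumber
\end{equation}
By strong convexity the matrix problem has a unique solution, so verifying that this assembled point is KKT-feasible proves it is the global minimizer.

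It then remains to solve each scalar problem. For a fixed $j$, clearing the denominator turns the stationarity condition with $\psi_j=0$ into a quadratic in $y_j$; its two roots have product $-(\sigma_w^2 d_j+\lambda)/c_{\mathbf{Y}}$, and whenever this product is negative exactly one root is nonnegative, namely the one with the $+$ branch of the square root, which after simplification matches the bracketed expression in the statement. When instead the unconstrained stationary point would be negative, complementary slackness forces $y_j=0$ with $\psi_j\geq 0$; the two cases are compactly combined by the projection $[\,\cdot\,]^{+}$, yielding ${y}_{i_n,j}$ as claimed and hence $\mathbf{Y}_{i_n}=\mathbf{U}_{i_n}\mathbf{D}_{\mathbf{Y}_{i_n}}\mathbf{U}_{i_n}^{H}$. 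I expect the main obstacle to be the diagonalization step$-$justifying rigorously that the minimizer and $\boldsymbol{\Psi}$ may be taken to commute with $\mathbf{M}$ (equivalently, that the ansatz exhausts the KKT system); once this simultaneous-diagonalization reduction is in place, the remaining scalar quadratic and its sign/case analysis are routine bookkeeping.
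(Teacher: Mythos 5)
Your proof is correct and follows the route one would expect here (the paper itself relegates its proof to a supplementary Appendix C not reproduced in the source, but the structure of the stated solution makes clear it is the same eigendecomposition/KKT argument): separate per user, use strong convexity plus Slater to reduce to the KKT system, diagonalize in the eigenbasis of $c_{\mathbf{Y}}\mathbf{Y}_{i_n}^{\nu}-\boldsymbol{\Omega}_{i_n}$, and solve the resulting scalar quadratics with the $[\cdot]^{+}$ projection. Two small remarks. First, the obstacle you flag at the end is already closed by your own argument: you do not need to show that the minimizer \emph{must} commute with $\mathbf{M}$, only that \emph{some} primal--dual pair of the posited diagonal form satisfies the KKT conditions; since the problem is convex with Slater's condition (KKT sufficient) and strongly convex (minimizer unique), exhibiting one such pair finishes the proof, so no further justification of the ansatz is required. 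Second, your scalar stationarity condition $c_{\mathbf{Y}}y_j-\lambda/(\sigma_w^2+y_j)-\psi_j=d_j$ yields $\lambda/c_{\mathbf{Y}}$ under the square root rather than the $\lambda_{i_n}/(2c_{\mathbf{Y}})$ appearing in the statement; the missing factor $1/2$ comes from the conjugate-gradient (Wirtinger) normalization of the $\log\det$ term relative to the inner-product and Frobenius-norm terms, which you partially hid by ``absorbing'' the base-of-logarithm constant into $\lambda$. You should carry that normalization explicitly so that the final expression matches the lemma exactly; the rest of the case analysis (sign of the product of roots, complementary slackness forcing $y_j=0$) is sound.
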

\begin{proof}
See Appendix  C in the supporting material for the proof here omitted  for lack of space. 
\end{proof}

Given  $\hat{\mathbf{W}}(\boldsymbol{\lambda}, \boldsymbol{\Omega};\mathbf{W}^{\nu})$, the dual problem associated with \ref{prob_cvx_y} is
\beq
\underset{\boldsymbol{\lambda}\geq 0, (\boldsymbol{\Omega}_{i_n}\succeq \mathbf{0})_{i_n \in \mathcal{I}}}\max \quad D(\boldsymbol{\lambda}, \boldsymbol{\Omega}; \mathbf{W}^{\nu}), \label{distr_dual_y}
\eeq
with $D(\boldsymbol{\lambda}, \boldsymbol{\Omega}; \mathbf{W}^{\nu})$ defined in (\ref{dual_distr_y1}). It can be show that the dual function is $C^2$, with Hessian Lipschitz continuous with respect to $\mathbf{W}^\nu$ on $\mathcal X$. Then, the dual problem \eqref{distr_dual_y} can be solved using either first or second order methods. An instance of gradient-based schemes is given in Algorithm \ref{alg:Alg_dual_distr_y}, whose convergence is guaranteed under the same conditions as in the Theorem \ref{thm:dual_decomp}. In S.3, the symbol $[\mathbf{A}]_+$ denotes the Euclidean projection of the square matrix $\mathbf{A}$ onto the convex set of   positive semidefinite matrices (having the same size of $\mathbf{A}$).\\
\indent A faster algorithm solving the dual problem can be  readily obtained using second order information. It is sufficient to replace the update of the multipliers in Step 3 of Algorithm \ref{alg:Alg_dual_distr_y} with the following (convergence is still guaranteed by Theorem 3):
\beq\label{Newton_algo}
\begin{split}
&\lambda^{k+1}_{i_n}= \lambda^{k}_{i_n}+\beta_k(\hat{\lambda}^{k+1}_{i_n}-\lambda^{k}_{i_n}),\quad \forall i_n \in \mathcal{I}\\
&{\boldsymbol{\Omega}}^{k+1}_{i_n}\triangleq {\boldsymbol{\Omega}}^{k}_{i_n} +\beta_{k} ( \hat{\boldsymbol{\Omega}}^{k+1}_{i_n}-\boldsymbol{\Omega}^{k}_{i_n}), \forall i_n \in \mathcal{I} \\
&\lambda^{k+1}_{f}=\lambda^{k}_{f}+\beta_k(\hat{\lambda}^{k+1}_{f}-\lambda^{k}_{f})
\end{split}
\eeq
where
\beq \label{lambda_new}
\begin{split}
\hspace{-2.4cm}\hat{{\lambda}}^{k+1}_{i_n}\triangleq \left[\hat{{\lambda}}^{k}_{i_n} + (\nabla^2_{\lambda_{i_n}} D(\hat{\mathbf{W}}^{k+1},\boldsymbol{\lambda}, \boldsymbol{\Omega}; \mathbf{W}^{\nu}))^{-1} \right. \\ \cdot  \left. \nabla_{\lambda_{i_n}} D(\hat{\mathbf{W}}^{k+1},\boldsymbol{\lambda}, \boldsymbol{\Omega}; \mathbf{W}^{\nu})  \right]^{+},
\end{split}
\eeq
\beq \label{Omega_new}
\begin{split}
\text{vec}\!\left(\hat{{\boldsymbol{\Omega}}}^{k+1}_{i_n}\right)\! \triangleq &  \left[ \text{vec}\!\left(\hat{{\boldsymbol{\Omega}}}^{k}_{i_n} \right) \right.\!\!\!
+\!\!\left(  \nabla_{\text{vec}(\boldsymbol{\Omega}_{i_n}^{*})}^{2}D(\hat{\mathbf{W}}^{k+1},\boldsymbol{\lambda}, \boldsymbol{\Omega}; \mathbf{W}^{\nu})\right)^{-1}\\
 & \left. \cdot \text{vec}\left(  \nabla_{\boldsymbol{\Omega}_{i_n}^{*}}D(\hat{\mathbf{W}}^{k+1},\boldsymbol{\lambda}, \boldsymbol{\Omega}; \mathbf{W}^{\nu})\right)\right]_{+},
\end{split}
\eeq
\beq
\begin{split}
 \label{lambdaf_new} \hspace{-2.0cm}\hat{{\lambda}}^{k+1}_{f}\triangleq \left[\hat{{\lambda}}^{k}_{f} +(\nabla^{2}_{\lambda_{f}} D(\hat{\mathbf{W}}^{k+1},\boldsymbol{\lambda}, \boldsymbol{\Omega}; \mathbf{W}^{\nu}))^{-1}\cdot \right.\\     \left. \nabla_{\lambda_{f}} D(\hat{\mathbf{W}}^{k+1},\boldsymbol{\lambda}, \boldsymbol{\Omega}; \mathbf{W}^{\nu})\right]^{+}.
 \end{split}
\eeq

 \begin{algorithm}[h]
\textbf{Initial data:} $\boldsymbol{\lambda}^{0}\geq \mathbf{0}$, $\boldsymbol{\Omega}^{0} \succeq \mathbf{0}$, $\mathbf{W}^{\nu}=(\mathbf{Q}^{\nu},\mathbf{Y}^{\nu}, \mathbf{f}^{\nu})$, $\{\beta_{k}\}_k>0$. Set $k=0$,

(\texttt{S.1}): If $\boldsymbol{\lambda}^{k}$, $\boldsymbol{\Omega}^{k}$ satisfy a suitable termination criterion:\texttt{STOP};

(\texttt{S.2}):  For each SCeNB  $n$, compute in parallel  ${\mathbf{Q}}^{k+1}_n(\boldsymbol{\lambda}^{k};\boldsymbol{\Omega}^{k};\mathbf{W}^{\nu})$,
${\mathbf{Y}}^{k+1}_n(\boldsymbol{\lambda}^{k};\boldsymbol{\Omega}^{k};\mathbf{W}^{\nu})$ and ${\mathbf{f}}^{k+1}_n(\boldsymbol{\lambda}^{k};\mathbf{W}^{\nu})$   solving  (\ref{dual_distr_nuQ_y})-(\ref{dual_distr_nuf_y});

(\texttt{S.3}):  Update at the master node $\boldsymbol{\lambda}$ and  $\boldsymbol{\Omega}$  according to
$${\lambda}^{k+1}_{i_n}\triangleq \left[{\lambda}^{k}_{i_n} +\beta_{k} \tilde{g}_{i_n}({\mathbf{Q}}^{k+1}_{i_n},{\mathbf{Q}}^{k+1}_{-n},{{f}}^{k+1}_{i_n};\mathbf{Q}^{\nu}, {f}^{\nu}_{i_n}) \right]^{+},\,\,\forall i_n,$$

$${\lambda}^{k+1}_{f}\triangleq \left[{\lambda}^{k}_{f} +\beta_{k} \left( \ds \sum_{i_n \in \mathcal{I}} f^{k+1}_{i_n}-f_T \right)\right]^{+}$$

$${\boldsymbol{\Omega}}^{k+1}_{i_n}\triangleq \left[{\boldsymbol{\Omega}}^{k}_{i_n} +\beta_k \left( \mathbf{Y}^{k+1}_{i_n}-\mathbf{I}_{i_n}(\mathbf{Q}^{k+1})\right)\right]_+,\,\,\forall i_n \in \mathcal{I}$$

(\texttt{S.4}): $k\leftarrow k+1$ and go back to (\texttt{S.1}).

\caption{\textbf{:}  Distributed dual scheme solving  \ref{prob_cvx_y}  \label{alg:Alg_dual_distr_y}}

\end{algorithm}\vspace{-0.1cm}

The explicit expression of the  Hessian matrices and
gradients in (\ref{lambda_new})-(\ref{lambdaf_new}) is given in Appendix D in the supporting document and here omitted  for lack of space. Numerical results show that  using  second order information significantly enhances practical  convergence speed. \vspace{-0.5cm}
\section{Numerical results}\label{sec:num_res}

In this section we present   some numerical results  to  assess the effectiveness of the proposed joint optimization of the communication and computational resources.

The simulated scenario is the following. We consider a network composed of $N_c=2$ cells, where all transceivers are equipped with $n_T=n_R=2$ antennas (unless stated otherwise). In each cell, there are $K_n=4$ active users, randomly deployed. In all our experiments the system parameters are set as
  (unless stated otherwise):   $f_T=2\cdot 10^{7}$,  $\tilde{T}=0.1$, $w=10^{5}$, $\mathbf{R}_w=N_0 \mathbf{I}_{n_r} $, $\texttt{snr}=10$dB. This choice guarantees the nonemptiness of the feasible set $\mathcal X$;  the constant $\alpha$ in the diminishing step-size rule (\ref{eq:diminishing_step_size}) is chosen as  $\alpha=1$e$-4$,   and the termination accuracy $\delta$ is set to $10^{-3}$.

\noindent \emph{Example $\#$ 1: Joint vs. disjoint optimization}.
We start comparing the energy consumption of the proposed offloading strategy with a method where
communication and computational resources are optimized separately.
The benchmark used to assess the relative merits of our approach is an instance of Algorithm 2 wherein
the computational rates $f_{i_n}$ are not optimized but set proportional to the computational load of each user, while meeting the computational rate constraint $f_T$ with equality, i.e., $f_{i_n}=w_{i_n }f_T/\sum_{i_n \in \mathcal{I}} w_{i_n } $ CPU cycles/second.  We termed such a method  \emph{Disjoint Resource Allocation (DRA)} algorithm. Note that this algorithm is still guaranteed to converge by Theorem \ref{thm:centralized}.
An important parameter useful to assess the usefulness of offloading algorithms is the ratio $\eta_{i_n}:=w_{i_n}/b_{i_n}$ between the computational load $w_{i_n}$ to be transferred and the number of bits $b_{i_n}$ enabling the transfer.
Fig.  \ref{figheur5n} shows an example of overall energy consumption, assuming the same ratio $\eta_{i_n}:=\eta$ for all users, obtained using Algorithm 2 and DRA algorithm. In particular, $\eta$ is varied keeping a fixed work load $w$ and changing the number $b_{i_n}$ of bits to be sent. The radio channels are Rayleigh fading and the results  are averages over   $100$ independent channel realizations. 
\begin{figure}[h]
 \centering
        \includegraphics[width=7.3cm]{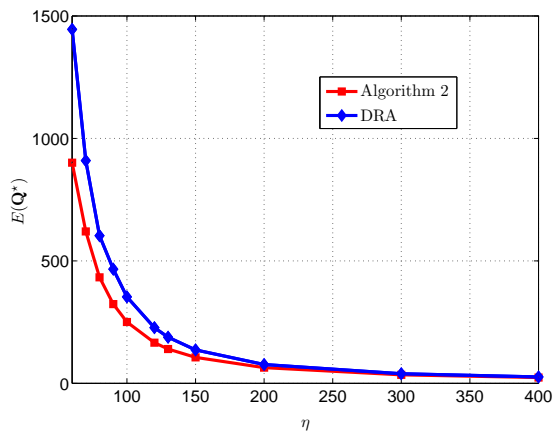}
       \caption{Energy consumption vs.  $\eta=w_{i_n}/b_{i_n}$ for Algorithm \ref{alg:Alg_centr} and for DRA.}\label{figheur5n} \vspace{-0.2cm}
\end{figure}
Fig. \ref{figheur5n} shows a few interesting features: i) the joint optimization yields a considerable gain with respect to the disjoint optimization  for applications having a low ratio $\eta$, i.e., applications with a high number of bits to be transferred, for a given computational load $w$; ii) the overall energy consumption decreases for computationally intensive applications, i.e., applications characterized by a high $\eta$.

\noindent \emph{Example $\#$ 2: On the convergence speed.}
   To test the convergence speed of Algorithm \ref{alg:Alg_centr}, Fig. \ref{figiterL}  shows the average energy consumption ${E}(\mathbf{Q}^{\nu})$   versus the iteration index $\nu$, for different values of the maximum latency $\tilde{T}_{i_n}$ (assumed to be equal for all users) and different number of receive antennas.  The curves are averaged over $100$ independent channel realizations.
The  interesting result is that  the proposed algorithm converges in very few iterations. Moreover, as expected,
 the energy consumption increases as the delay constraint becomes more stringent because more transmit energy has to be used to respect the latency limit. Finally, it is worth noticing the gain achievable by increasing the number of receive antennas.\\
Since the overall optimization problem is non-convex, the proposed algorithm may fall into a local minimum.
To evaluate this aspect, we ran our algorithm under $1,000$ independent initializations of the initial parameter setting
 $\mathbf{Z}^{0}=(\mathbf{Q}^{0},\mathbf{f}^{0})\in \mathcal{X}$ of Algorithm \ref{alg:Alg_centr} and, quite interestingly, we always ended up with practically the same result, meaning that the differences where within the third decimal point. 

\begin{figure}[t]
\centering
\includegraphics[width=7.1cm]{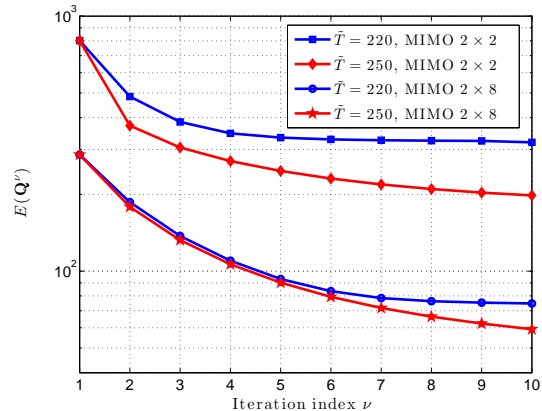}\vspace{-0.2cm}
\caption{Convergence speed: Optimal energy vs. the iteration index  for different
values of $\tilde{T}$.}\label{figiterL}\vspace{-0.4cm}
\end{figure}


\noindent \emph{Example $\#$ 3: Distributed Algorithms.}
Finally, we tested the efficiency of the distributed algorithms proposed in Section \ref{section:decentralized}. We assume $P_{i_n}=P_T=1000$, $\alpha=1$e$-5$ and the termination accuracy $\delta$ is set to $10^{-2}$. Fig. \ref{alg_comp2} shows the energy evolution versus the iteration index $m$, which counts  the overall number of  (inner and outer) iterations 
 in Algorithm \ref{alg:Alg_centr}.
More specifically, we compared three different algorithms used to run Step $2$, namely:   the dual-decomposition method described in Algorithm \ref{alg:Alg_dual_distr}, the  dual-scheme based on the reformulation of the nonconvex problem $\mathcal P$ using slack-variables
   as given in   Algorithm \ref{alg:Alg_dual_distr_y}, and its accelerated version based on the Newton implementation (\ref{Newton_algo}). All implementations are quite fast. As expected, using second order information enhances convergence speed.
  \begin{figure}[h]
\centering
\includegraphics[width=7.5cm]{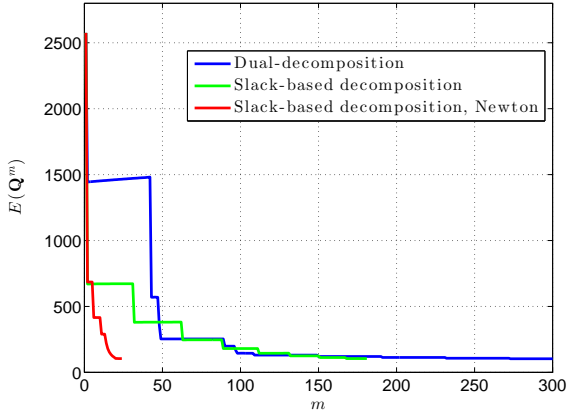}
\caption{Evolution of the global energy  for the  distributed algorithms vs. the iteration index $m$.}\label{alg_comp2}\vspace{-0.4cm}
\end{figure}

\section{Conclusions}
In this paper we formulated the computation offloading problem in a multi-cell mobile edge-computing scenario, where a dense deployment of radio access points facilitates proximity high bandwidth access to computational resources, but increases also intercell interference. We formulated the resource optimization problem as the joint optimization of radio and computational resources, aimed at minimizing MUs' energy consumption, under latency and power budget constraints. In the single-user case, we computed the global optimal solution of the resulting nonconvex optimization problem in closed form. In the more general multi-cell multi-user scenario, we developed centralized and distributed SCA-based algorithms with provable convergence to local optimal solutions of the nonconvex problem. Numerical results show that our algorithms outperform disjoint optimization schemes. Furthermore, the results show, as expected, that offloading is more convenient for applications with high computational load and small number of bits to be exchanged to enable program migration.
 \vspace{-0.5cm}

\appendix
\subsection{Proof of Theorem \ref{thm:MIMO_SU_SC_energy}}
\label{A:proof Th1} \vspace{-0.1cm}

\noindent (a) It is sufficient to prove the following two facts.\smallskip

\noindent \textbf{Fact  1:} Any stationary point of   the nonconvex problem \ref{P_MIMO_SU_ener}  is a \emph{global} optimal solution of the problem.\smallskip

\noindent \textbf{Fact  2:}  Any stationary point of the \emph{convex} problem \ref{P_MIMO_SU}  (and thus a globally  optimal solution to \ref{P_MIMO_SU}), is also a stationary point of \ref{P_MIMO_SU_ener}, and viceversa.\smallskip

\noindent \textbf{Proof of Fact  1:}  Invoking   \cite[Theorem 3.39 ]{Avriel}, it is sufficient to show that   the objective function $E(\mathbf{Q})$ is  a pseudo-convex function on the convex set $\mathcal X_s$, i.e., \cite[Def. 3.1.3]{Avriel}

 \beq
\forall \bQ,\bY\in \mathcal X_s\,:\,E(\mathbf{Q})< E(\mathbf{Y} ) \; \Rightarrow \; \langle \nabla_{\mathbf{Q}^\ast} E(\mathbf{Y}),\mathbf{Q}-\mathbf{Y} \rangle <0. \label{pseudo-conv}
\eeq

Fix  $\bY\in \mathcal X_s$, and introduce the \emph{convex} $\mathcal C^1$ function $\phi_{\bY}: \mathcal X_s \rightarrow \mathbb{R}$ defined as
\beq\label{def_phi}
\phi_{\bY}(\mathbf{Q})\triangleq \text{tr}(\mathbf{Q})\cdot r(\mathbf{Y})-\text{tr}(\mathbf{Y})\cdot r(\mathbf{Q}).
\eeq
 Then, for any $\bQ\in \mathcal{X}_s$ such that $E(\mathbf{Q})< E(\mathbf{Y})$, the following holds:
 \beq
\begin{array}{lll}\label{pseudo_cvx_of_E}
\langle \nabla_{\mathbf{Q}^\ast} E(\mathbf{Y}),\mathbf{Q}-\mathbf{Y}\rangle &\stackrel{(a)}{=}& \dfrac{\langle \nabla_{\mathbf{Q}^\ast} \phi_{\bY}(\mathbf{Y}),\mathbf{Q}-\mathbf{Y}\rangle}{r(\bY)^2} \medskip \\
& \stackrel{(b)}{\leq} & \dfrac{\phi_{\bY}(\mathbf{Q})-\phi_{\bY}(\mathbf{Y})}{r(\bY)^2}\, \stackrel{(c)}{<}\,   0,
\end{array}
\eeq
 where (a) follows from the definition of $\phi_{\bY}$ in \eqref{def_phi}; (b) is due to the convexity of $\phi_{\bY}$ on $\mathcal X_s$; and (c) comes from  $E(\mathbf{Q})< E(\mathbf{Y}) \Rightarrow \phi_{\bY}(\mathbf{Q})< \phi_{\bY}(\mathbf{Y})$. Since (\ref{pseudo_cvx_of_E}) holds for any given  $\bY\in \mathcal X_s$, (\ref{pseudo-conv}) holds true. \hfill $\square$   \smallskip

\noindent \textbf{Proof of Fact  2:} Let us prove the two directions separately.

\noindent $\mathcal{Q}_s \Rightarrow \mathcal{P}_s$: Let $(\bQ^\star, f^\star)$ be the optimal solution of the convex problem $\mathcal{Q}_s$; denote $\tilde{\bQ}^\star\triangleq \mathbf{U}^H \mathbf{Q}^\star \mathbf{U}$. Then, there exist multipliers $\lambda^{\star}_p,\mu^{\star}_p,\alpha^{\star}_p,\boldsymbol{\Phi}^{\star}_p$ such that the tuple $(\tilde{\bQ}^\star,f^\star, \lambda^{\star}_p,\mu^{\star}_p,\alpha^{\star}_p,\beta^{\star}_p,\boldsymbol{\Phi}^{\star}_p)$ satisfies the KKT conditions of $\mathcal Q_s$ (note that  Slater's constraint qualification is satisfied): denoting  $\tilde{r}(\tilde{\bQ}^\star)\triangleq \log_2 |\mathbf{I}+\mathbf{D}^{1/2}\tilde{\mathbf{Q}}^\star  \mathbf{D}^{1/2}  |$, and after some simplifications, one gets
\beq
\begin{array}{llllll}
\text{(a):}\qquad \mathbf{I} -\ds \frac{\mu^{\star}_p}{\log(2)}\, \mathbf{D}^{1/2} (\mathbf{I}+ \mathbf{D}^{1/2} \tilde{\mathbf{Q}}^{\star}\mathbf{D}^{1/2} )^{-1} \mathbf{D}^{1/2} \\  \hfill +\lambda^{\star}_p \,\mathbf{I}-\mathbf{\Phi}^{\star}_p=\mathbf{0}\medskip\\
\text{(b):}\qquad\ds\frac{\mu^{\star}_p\, w \,c}{{f^{\star}}^{2}(\tilde{T}-w/f^{\star})^2}-\alpha^{\star}_p=0\medskip\\
\text{(c):}\qquad 0\leq \lambda^{\star}_p \,\perp\, \left(P_T-\mbox{tr}(\tilde{\mathbf{Q}}^{\star})\right)\geq 0\medskip\\
\text{(d):}\qquad 0< \mu^{\star}_p, \qquad \ds \frac{c}{\tilde{T}-\frac{w}{f^{\star}}}-\tilde{r}(\tilde{\bQ}^\star)= 0\medskip\\
\text{(e):}\qquad\mathbf{0} \preceq\tilde{\mathbf{Q}}^{\star}\,\perp\, \mathbf{\Phi}^{\star}_p\succeq \mathbf{0}\medskip\\
\text{(f):}\qquad 0\leq \alpha^{\star}_p,\qquad    \,f^{\star}=f_T,
 \end{array}\label{KKT_power}\tag{KKT$_{\mathcal Q_s}$}
\eeq
where $\mathbf{A}\perp \textbf{B}$ stands for $\left\langle \mathbf{A},\mathbf{B} \right\rangle =0$, and in (d) and (f) we used the fact that $\mu^{\star}_p$ must be positive and $f^\star=f_T$, respectively  (otherwise \ref{KKT_power} cannot be satisfied).
We prove next that there exist multipliers  $\lambda^{\star}_e,\mu^{\star}_e,\alpha^{\star}_e,\boldsymbol{\Phi}^{\star}_e$  that together with the optimal solution $(\tilde{\bQ}^\star,f^\star)$ of $\mathcal Q_s$ satisfy the KKT conditions of $\mathcal P_s$, i.e.,
\beq
\begin{array}{llllll}
\text{(a$'$):}\qquad \ds \frac{c \cdot \mathbf{I}}{\tilde{r}(\tilde{\bQ}^\star)}  -\,
 \frac{c \cdot \text{tr}(\tilde{\mathbf{Q}}^\star)  \mathbf{D}^{1/2}   (\mathbf{I}+\mathbf{D}^{1/2}\tilde{\mathbf{Q}}^\star \mathbf{D}^{1/2})^{-1}\mathbf{D}^{1/2} }{\tilde{r}(\tilde{\bQ}^\star)^2 \log(2)}\smallskip\\ \hfill -\dfrac{\mu_e^\star}{\log(2)} \mathbf{D}^{1/2} (\mathbf{I}+ \tilde{\mathbf{Q}}^\star \mathbf{D} )^{-1}\mathbf{D}^{1/2}   +\lambda_e^\star \mathbf{I}-\mathbf{\Phi}_e^\star=\mathbf{0}\medskip\\
\text{(b$'$):}\qquad \ds\frac{\mu_e^\star\, w \,c}{f^{\star\,2}(\tilde{T}-{w}/{f^\star})^2}-\alpha_e^\star=0\medskip\\
 \text{(c$'$):}\qquad0\leq \lambda_e^\star\, \perp \,\left(P_T-\mbox{tr}(\tilde{\mathbf{Q}}^\star)\right)\geq 0\medskip\\
\text{(d$'$):}\qquad 0\leq \mu_e^\star\, \perp\, \left(\ds \tilde{r}(\tilde{\bQ}^\star)-\ds \frac{c}{\tilde{T}-{w}/{f^\star}}\right)\geq 0\medskip\\
\text{(e$'$):}\qquad\mathbf{0}\preceq \tilde{\mathbf{Q}}\,\perp \, \mathbf{\Phi}_e^\star\succeq \mathbf{0}\medskip\\
\text{(f$'$):}\qquad 0\leq \alpha_e^\star \, \perp\, (f_T-f^\star)\geq 0.
 \end{array}\label{KKT}\tag{KKT$_{\mathcal P_s}$}\eeq
 Plugging (a) of (\ref{KKT_power})  in (a$'$) of (\ref{KKT}) and using the fact that $\mu_p^\star>0$, we obtain:
\beq \label{lambda1}
\begin{array}{lll}
\lambda^{\star}_e\, \mathbf{I}=-\ds \frac{c\, \mathbf{I}}{\tilde{r}(\tilde{\mathbf{Q}}^{\star})}+  \ds \dfrac{(1+\lambda^{\star}_p)}{\mu^{\star}_p} \left( \ds \dfrac{c \,\text{tr}(\tilde{\mathbf{Q}}^{\star})}{\tilde{r}(\tilde{\mathbf{Q}}^{\star})^2}+\ds \mu^{\star}_e\right)\cdot \mathbf{I} \medskip\\
\qquad\quad+\,\mathbf{\Phi}^{\star}_e- \dfrac{1}{\mu^{\star}_p}\,\left( \ds \frac{c\, \text{tr}(\tilde{\mathbf{Q}}^{\star})}{\tilde{r}(\tilde{\mathbf{Q}}^{\star})^2}+\mu^{\star}_e\right)\cdot \mathbf{\Phi}^{\star}_p,
\end{array}
\eeq
which is satisfied if one set   $\mathbf{\Phi}^{\star}_e$, $\lambda^{\star}_e$, and $\mu^{\star}_e$ to
\beq
\begin{array}{lll}
\mathbf{\Phi}^{\star}_e &\triangleq &  \dfrac{1}{\mu^{\star}_p}\left( \ds \frac{c\, \text{tr}(\tilde{\mathbf{Q}}^{\star})}{\tilde{r}(\tilde{\mathbf{Q}}^{\star})^2}+\mu^{\star}_e\right)\cdot \mathbf{\Phi}^{\star}_p\medskip\\
{\mu}^{\star}_e &\triangleq &  \ds \frac{c\, \mu^{\star}_p}{\tilde{r}(\tilde{\mathbf{Q}}^{\star})(1+\lambda^{\star}_p)}-\ds \frac{c\, \text{tr}(\tilde{\mathbf{Q}}^{\star})}{\tilde{r}(\tilde{\mathbf{Q}}^{\star})^2}\medskip\\
\lambda^{\star}_e &\triangleq & 0.
\end{array}\label{cond_mult}
\eeq
By (b$'$) it must be
\beq
\alpha^{\star}_e = \ds\frac{\mu_e^\star\, w \,c}{f^{\star\,2}(\tilde{T}-{w}/{f^\star})^2}.
\label{mu_alpha_cond}
\eeq
Note that, to be a valid candidate solution of   \ref{KKT}, $\mu^{\star}_e$ must be nonnegative [cf. (d$'$)], which by (\ref{cond_mult}), is equivalent to
\beq \label{kappa1}
     \dfrac{1+\lambda_p^\star}{\mu^{\star}_p}\cdot {\text{tr}(\tilde{\mathbf{Q}}^{\star})} \leq \tilde{r}(\tilde{\mathbf{Q}}^{\star}).
\eeq
We show next that \eqref{kappa1} holds true.  By multiplying both sides of (a) by    $\tilde{\mathbf{Q}}^{\star}$ and using the complementarity condition $\langle\mathbf{\Phi}^{\star}_p, \tilde{\mathbf{Q}}^{\star}\rangle=0$  [cf. (e)] we get \vspace{-0.2cm}
\beq \label{cond1}
\begin{split}
 \dfrac{1+\lambda_p^\star}{\mu^{\star}_p}\!\cdot {\text{tr}(\tilde{\mathbf{Q}}^{\star})}&\! =\!\ds \frac{1}{\log(2)}\langle{\tilde{\mathbf{Q}}^{\star}}, \mathbf{D}^{1/2}   (\mathbf{I}\!+\mathbf{D}^{1/2} \tilde{\mathbf{Q}}^{\star} \mathbf{D}^{1/2})^{-1}\mathbf{D}^{1/2}\rangle \\ & =\langle \nabla_{\mathbf{Q}^{*}}\tilde{r}(\tilde{\mathbf{Q}}^{\star}),\tilde{\mathbf{Q}}^{\star}\rangle \leq  \tilde{r}(\tilde{\mathbf{Q}}^{\star}),
\end{split}
\eeq
where in the last inequality we used the concavity of the rate function $\tilde{r}(\bullet)$, i.e.,
\beq \label{first_order}
\tilde{r}(\mathbf{Y})\leq \tilde{r}(\mathbf{W})+\langle \nabla_{\mathbf{Q}^{*}}\tilde{r}(\mathbf{W}),\mathbf{Y}-\mathbf{W}\rangle, \quad \forall \mathbf{Y},\mathbf{W}\succeq \mathbf{0}
\eeq
evaluated at  $\mathbf{Y}=\mathbf{0}$ and $\mathbf{W}=\tilde{\mathbf{Q}}^{\star}$.
The desired result, $\mu_e^\star \geq 0$, follows readily combining (\ref{kappa1}) and (\ref{cond1}).

We show now  that the obtained  tuple  $(\tilde{\bQ}^\star,f^\star,\lambda^{\star}_e,\mu^{\star}_e,\alpha^{\star}_e,$ $\boldsymbol{\Phi}^{\star}_e)$ satisfies \ref{KKT}. Indeed,   (a$'$) follows from \eqref{cond_mult};  given  $\mu_e^\star \geq 0$, (b$'$) is satisfied by   $\alpha^{\star}_e$  as  in \eqref{mu_alpha_cond};   (c$'$) follows from  $P_T- \text{tr}(\tilde{\mathbf{Q}}^{\star})\geq 0$ [cf. (c)] and $\lambda^{\star}_e=0$;  (d$'$) follows from  $\mu^{\star}_e\geq 0$ and the second equality in (d). Finally,  it is not difficult to see that  $\mathbf{\Phi}^{\star}_e$ given by (\ref{cond_mult}) satisfies   (e$'$); and finally (f$'$) is trivially met by  $\alpha^{\star}_e\geq 0$ in \eqref{mu_alpha_cond}.
 This completes the first part of the proof.

  \noindent $\mathcal{P}_s \Rightarrow \mathcal{Q}_s$: the proof follows the same idea as for $\mathcal{Q}_s \Rightarrow \mathcal{P}_s$; we then only sketch  the main steps.  Let $(\tilde{\bQ}^\star,f^\star, \lambda^{\star}_e,\mu^{\star}_e,\alpha^{\star}_e, \boldsymbol{\Phi}^{\star}_e)$ be a tuple satisfying \ref{KKT} (whose existence is guaranteed by the  Slater's constraint qualification). We prove next that there exist multipliers $(\lambda^{\star}_p,\mu^{\star}_p,\alpha^{\star}_p, \boldsymbol{\Phi}^{\star}_p)$ such that $(\tilde{\bQ}^\star,f^\star, \lambda^{\star}_p,\mu^{\star}_p,\alpha^{\star}_p, \boldsymbol{\Phi}^{\star}_p)$ satisfies \ref{KKT_power}.
Define  $$\kappa_e=\mu^{\star}_e+\ds \frac{c \,\text{tr}(\tilde{\mathbf{Q}}^\star)}{\tilde{r}(\tilde{\bQ}^\star)^2}> 0.$$ %
Given (a$'$), it can be easily seen that (a) is satisfied if $\mathbf{\Phi}^{\star}_p$,  $\lambda^{\star}_p$, and ${\mu}^{\star}_p$ are chosen as\beq \label{cond_pow}
\mathbf{\Phi}^{\star}_p=\ds \frac{\mu^{\star}_p}{\kappa_e}\, \mathbf{\Phi}^{\star}_e,\quad
{\mu}^{\star}_p  =    \ds \frac{\kappa_e}{\lambda^{\star}_e+\ds \frac{c}{\tilde{r}(\tilde{\bQ}^\star)}},\quad\text{and}\quad
\lambda^{\star}_p =0.
\eeq
From  (b) it must also be
\beq
\alpha^{\star}_p  =  \ds\frac{\mu_p^\star\, w \,c}{f^{\star\,2}(\tilde{T}-{w}/{f^\star})^2}.
\label{mu_alpha_cond2}
\eeq

 It is not difficult to check that the obtained tuple    $(\tilde{\bQ},f^\star,$ $\lambda^{\star}_p,\mu^{\star}_p,\alpha^{\star}_p,\boldsymbol{\Phi}^{\star}_p)$ satisfies (a), (b), (c), (e), and (f) of \ref{KKT_power};   the only condition that needs a proof is the equality constraint in (d), as given next.

Suppose by contradiction that $\ds \tilde{r}(\tilde{\bQ}^\star)- \ds \frac{c}{\tilde{T}-{w}/{f^\star}}>0$. Then, it follows from  (d$'$)
 that  $\mu^{\star}_e=0$, and (a$'$) reduces to
 \beq\nonumber
\ds \frac{c \, \mathbf{I}}{\tilde{r}(\tilde{\bQ}^\star)}  -\,
 \ds \frac{c \, \text{tr}(\tilde{\mathbf{Q}}^\star)  \mathbf{D}^{1/2}   (\mathbf{I}+\mathbf{D}^{1/2}\tilde{\mathbf{Q}}^\star \mathbf{D}^{1/2})^{-1} \mathbf{D}^{1/2} }{\log(2) \tilde{r}(\tilde{\bQ}^\star)^2}= \!\!-\lambda_e^\star \mathbf{I}+\mathbf{\Phi}_e^\star.
\eeq
Multiplying the above  equation by $\tilde{\mathbf{Q}}^\star$ and using the complementary condition (e$'$), we  get
\beq
\lambda_e^\star=\ds \frac{c}{\tilde{r}(\tilde{\bQ}^{*})^2}\left(\langle \nabla_{\mathbf{Q}^{*}}\tilde{r}(\tilde{\mathbf{Q}}^{\star}),\tilde{\mathbf{Q}}^{\star}\rangle -r(\tilde{\bQ}^{\star})\right),
\eeq
which, given $\lambda_e^\star\geq 0$ [cf. (c$'$)] and   $\langle \nabla_{\mathbf{Q}^{*}}\tilde{r}(\tilde{\mathbf{Q}}^{\star}),\tilde{\mathbf{Q}}^{\star}\rangle\leq \tilde{r}(\tilde{\mathbf{Q}}^{\star})$ [due to (\ref{first_order})], can be satisfied only if
  $
\langle \nabla_{\mathbf{Q}^{*}}\tilde{r}(\tilde{\mathbf{Q}}^{\star}),\tilde{\mathbf{Q}}^{\star}\rangle =r(\tilde{\bQ}^\star)
  $, i.e.,
 \beq
 \begin{split}\nonumber 
 &\log_2\det(\mathbf{I}+\mathbf{D}^{1/2}\tilde{\mathbf{Q}^\star}\mathbf{D}^{1/2}) \\&\quad=
 \text{tr}\left(\tilde{\mathbf{Q}}^\star \mathbf{D}^{1/2}(\mathbf{I}+\mathbf{D}^{1/2}\tilde{\mathbf{Q}}^\star\mathbf{D}^{1/2})^{-1}  \cdot \mathbf{D}^{1/2}\right)\cdot \ds \frac{1}{\log(2)}.
 \end{split}
 \eeq
  Denoting by   $(\sigma_i=\sigma_i(\mathbf{D}^{1/2}\tilde{\mathbf{Q}}^\star\mathbf{D}^{1/2}))_{i=1}^r\geq \mathbf{0}$ the non-negative eigenvalues of $\mathbf{D}^{1/2}\tilde{\mathbf{Q}}^\star\mathbf{D}^{1/2}$, the above equality can be rewritten as
   \beq\nonumber
  \ds \sum_{i=1}^{r} \log(1+\sigma_i)= \ds \sum_{i=1}^{r}\ds \frac{\sigma_i}{1+\sigma_i},
   \eeq
which can be true only if    $\sigma_i=0$ for all $i=1,\cdots,r$, and thus
 $\tilde{\mathbf{Q}}^\star=\mathbf{0}$ (note that $\mathbf{D}\neq\mathbf{0}$). This however    is in contradiction with the fact that $\mathbf{Q}^{\star}$ is an optimal solution of $\mathcal Q_s$.


\noindent (b): Invoking part  (a) of the theorem,  the solution   $(\bQ^\star, f^\star)$  of
\ref{P_MIMO_SU} (and thus \ref{P_MIMO_SU_ener}) can be computed  solving \ref{KKT_power}. Denote  $\tilde{\bQ}^\star\triangleq \mathbf{U}^H \mathbf{Q}^\star \mathbf{U}$.
Multiplying (a) of \ref{KKT_power} by $\tilde{\mathbf{Q}}^{\star}$ and using (e), we get
\beq \label{grad_qs}
\mathbf{I} -\alpha  \mathbf{D}^{1/2} (\mathbf{I}+ \mathbf{D}^{1/2} \tilde{\mathbf{Q}}^{\star}\mathbf{D}^{1/2} )^{-1} \mathbf{D}^{1/2}=\mathbf{0}
\eeq
with $\alpha\triangleq  {\mu^{\star}_p}/\log(2)$ (recall that one can set $\lambda^{\star}_p=0$).
By solving (\ref{grad_qs}) and using $\tilde{\bQ}^\star\triangleq \mathbf{U}^H \mathbf{Q}^\star \mathbf{U}$ one obtains the desired expression of $\mathbf{Q}^\star $ as in (\ref{MIM0_SU}). Moreover, it follows  from  (f) that   $f^{\star}=f_T$.
The only thing left to show is  how to compute  $\alpha$ (and thus $\mu^\star_p$) efficiently. Using the optimal structure of $\mathbf{Q}^\star$ and denoting $r_e\triangleq \text{rank}(\mathbf{Q}^\star)$, conditions (c) and (d) reduce respectively to
\beq \label{cc_cond}
 \alpha=2^{\ds \frac{c}{r_e L}-\ds \frac{1}{r_e} \sum_{i=1}^{r_e} \log_2(d_i)}\,\,\text{and}\,\,\,\, \sum_{i=1}^{r_e}  \left(\alpha-\ds \frac{1}{d_i}\right)\leq P_T,
\eeq
with $L=\tilde{T}-\frac{w}{f_T}$. Note that Slater's constraint qualification guarantees that there exist $\alpha$ and $r_e$ satisfying  $\eqref{cc_cond}$. Moreover, it is not difficult to check that they can be efficiently computed using the  procedure described in Algorithm \ref{algorithm:Alg_SU_SC_MIMO}.

\vspace{-0.6cm}
\bibliographystyle{IEEEtran}
\bibliography{scutari_refs}

\end{document}